\newcommand{\restate}[3]{
\medskip
\noindent {\bf #1~#2.} 
\textit{#3}
\medskip
}
\newcommand{\restatethm}[2]{\restate{Theorem}{#1}{#2}}
\newcommand{\restatelem}[2]{\restate{Lemma}{#1}{#2}}
\newcommand{\eps}{\epsilon}
\newcommand{\brm}[1]{\operatorname{#1}}
\title{A Near-Optimal Mechanism for Impartial Selection}
\author{Nicolas Bousquet\inst{1}\inst{3} \and Sergey Norin\inst{1} \and Adrian Vetta\inst{1}\inst{2}}
\institute{Department of Mathematics and Statistics, McGill University 
\and School of Computer Science, McGill University 
\and Group for Research in Decision Analysis (GERAD), HEC Montr\'eal}
\begin{document}

\maketitle

\begin{abstract}
We examine strategy-proof elections to select a winner amongst a set of agents, each of whom 
cares only about winning. 
This {\em impartial selection} problem was introduced independently by Holzman and 
Moulin~\cite{HM13}  and Alon et al.~\cite{AFP11}.
Fisher and Klimm~\cite{FK14} showed that the permutation mechanism is impartial and $\frac12$-optimal, 
that is, it selects an agent who gains, in
expectation, at least half the number of votes of most popular agent. Furthermore, they showed the mechanism 
is $\frac{7}{12}$-optimal
if agents cannot abstain in the election.
We show that a better guarantee is possible, provided the most popular agent receives 
at least a large enough, but constant, number of votes. 
Specifically, we prove that, for any $\epsilon>0$, there is a constant $N_{\epsilon}$ 
(independent of the number $n$ of voters) 
such that, if the maximum number of votes of the most popular agent is at least $N_{\epsilon}$ then the
permutation mechanism is $(\frac{3}{4}-\epsilon)$-optimal. This result is tight.\\
Furthermore, in our main result, we prove that near-optimal impartial mechanisms exist. In particular, there is an impartial mechanism
that is $(1-\epsilon)$-optimal, for any $\epsilon>0$, provided that the maximum number of votes 
of the most popular agent is at least a constant $M_{\epsilon}$. 
\end{abstract}

\section{Introduction}
Imagine an election where the voters are the candidates and each voter 
is allowed to vote for as many of the other candidates as she wishes.
Now suppose each voter cares only about winning. 
The goal of the mechanism is to elect the candidate with the maximum support.
To achieve this, we desire that the election mechanism be strategy-proof.
Thus, we want an {\em impartial mechanism}, where voting truthfully
cannot affect an agent's own chances of election.

This problem, called the {\em impartial selection problem}, was introduced independently by Holzman 
and Moulin~\cite{HM13}  and Alon et al.~\cite{AFP11}. In addition to elections, they were motivated
by nomination mechanisms for prestigious prizes and committees, hyperlink formations,
and reputation systems in social networks. Fisher and Klimm \cite{FK14} also proposed the use of 
such mechanisms for peer review evaluation processes.

The impartial selection problem can be formalized via a directed graph $G=(V,A)$.
There is a vertex $v\in V$ for each voter (candidate) $v$, and there is an arc $(u, v)\in A$ 
if $u$ votes for $v$. 
The aim is to maximize the in-degree of the selected vertex, and we
say that an impartial mechanism is $\alpha$-optimal, for $\alpha\le 1$, if the
in-degree of the vertex it selects is always at least $\alpha$ times the
in-degree of the most popular vertex.

Unfortunately, Moulin~\cite{HM13}  and Alon et al.~\cite{AFP11}
observed that a {\em deterministic} impartial mechanisms must have an arbitrarily poor
approximation guarantee $\alpha$. Specifically, a deterministic mechanism may have to select a vertex with zero
in-degree even when other vertices receive votes; it may even be
forced to select a vertex with in-degree one whilst another vertex receives $n-1$ votes!
This negative result motivated Alon et al.~\cite{AFP11} to study {\em randomized} impartial mechanisms.
In particular, they examined a simple mechanism dubbed the {\em 2-partition mechanism}.
This mechanism independently assigns each vertex to one of two groups $\{V_1, V_2\}$.
Then, only the arcs from vertices in $V_1$ to vertices in $V_2$ are counted as votes.
The vertex with the maximum number of {\em counted} votes in $V_2$ is selected (breaking ties arbitrarily).
It is straight-forward to verify that this mechanism is impartial and is
$\frac14$-optimal. They further conjectured the existence of an $\frac12$-optimal impartial 
randomized mechanism.

This conjecture was recently proven by Fisher and Klimm \cite{FK14}. Specifically, they
proved that the {\em permutation mechanism} is impartial and $\frac12$-optimal.
This election mechanism examines the vertices in a random order,
and can only count the votes of a vertex that go to vertices behind it in the 
ordering. (See Section \ref{sec:rand} for a detailed description of the mechanism
and a short proof of Fisher and Klimm's result.)
Interestingly, the factor $\frac12$-approximation guarantee is tight. Consider an $n$-vertex
graph containing only a single arc $(u,v)$. Then, unless $u$ is before $v$ in the random permutation the
mechanism will select a vertex with in-degree zero. Thus the expected
in-degree of the victor is at most one half. 
 
 Observe that this tight example is rather unsatisfactory. It is extremely unnatural and relies on the fact that
every vertex bar one abstains from voting. Indeed, Fisher and Klimm \cite{FK14} showed that
without abstentions the permutation mechanism is at least $\frac{7}{12}$-optimal.
They leave open the possibility that the permutation mechanism actually proffers
a better approximation guarantee than $\frac{7}{12}$. They do prove, however, that without abstentions
the permutation mechanism can be no better than $\frac{2}{3}$-optimal.
Moreover, Fisher and Klimm  \cite{FK14} provide an even stronger inapproximation bound: 
{\bf no} impartial mechanism can be better than $\frac{3}{4}$-optimal, even without abstentions.

This appears to severely limit the potential for progress. But, again, the lower bounds
are somewhat unsatisfactory. The issue now is not low out-degrees (that is, abstentions) but
rather low in-degrees. The lower bounds are all based on instances with extremely small
maximum in-degree $\Delta^-$. Specifically, the factor $\frac12$ optimal example~\cite{AFP11} for the 
permutation mechanism with abstentions 
has $\Delta^-=1$; the factor $\frac{2}{3}$ optimal example  \cite{FK14} for the permutation mechanism without abstentions 
has $\Delta^-=3$;  the factor $\frac{3}{4}$ optimal example \cite{FK14} for any randomized mechanism without abstentions 
has $\Delta^-=2$. Of course, in applications with a large number $n$ of voters, we would
anticipate that the most popular agent receives a moderate number of votes. 
Do these inapproximability bounds still apply for these more realistic settings?
Interestingly, the answer is no, even for cases where the most popular agent receives only
a (large enough) constant number of votes. Specifically, we first prove that the the permutation mechanism
is nearly $\frac34$-optimal in such instances.
\begin{theorem}\label{thm:permutation}
For any $\epsilon>0$, there is a constant $N_{\epsilon}$ such that if $\Delta^- \ge N_{\epsilon}$ then
the permutation mechanism is $(\frac34-\epsilon)$-optimal.
\end{theorem}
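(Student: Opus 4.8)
The plan is to fix a vertex $v^{*}$ of maximum in‑degree $\Delta^{-}=:D$ and to lower‑bound the expected in‑degree of the vertex $w$ returned by the permutation mechanism by conditioning on where $v^{*}$ and its in‑neighbours fall in the random order $\pi$. For a vertex $v$, call $s(v)$ its \emph{score}: the number of in‑neighbours of $v$ preceding $v$ in $\pi$. Fisher and Klimm's $\tfrac12$‑guarantee rests on the observations that $\mathrm{indeg}(w)\ge s(w)$ and that $w$ carries a score of at least $s(v^{*})$, so that $\mathrm{indeg}(w)\ge s(v^{*})$, together with $\mathbb{E}[s(v^{*})]=D/2$. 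The extra leverage for the $\tfrac34$‑bound is that $\mathrm{indeg}(w)=D$ on the event that $v^{*}$ is itself the selected vertex, and that $v^{*}$ fails to be selected only when some other vertex out‑scores it --- an event one can keep rare precisely when $v^{*}$ sits early in $\pi$ and hence has a small score.

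Concretely, I would condition on the position of $v^{*}$, writing $\rho\in(0,1]$ for that position divided by $n$. Given $\rho$, the score $S:=s(v^{*})$ is a hypergeometric‑type sum of $D$ negatively associated indicators with mean essentially $\rho D$, so Hoeffding's inequality for sampling without replacement gives $\Pr[\,|S-\rho D|>\epsilon D\,]\le e^{-\Omega(\epsilon^{2}D)}$, a bound independent of $n$; this is the only place the hypothesis $D\ge N_{\epsilon}$ is used, and it forces $N_{\epsilon}=\Theta(\epsilon^{-2}\log\tfrac1\epsilon)$. On the complementary good event one has $\mathrm{indeg}(w)\ge S\ge(\rho-\epsilon)D$ whenever $v^{*}$ loses and $\mathrm{indeg}(w)=D$ whenever $v^{*}$ wins, so $\mathbb{E}[\mathrm{indeg}(w)\mid\rho]$ is at least $(\rho-\epsilon)D$ plus $(1-\rho+\epsilon)D$ times the probability $q(\rho)$ that $v^{*}$ wins given $\rho$, up to an $o(D)$ term. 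Everything then hinges on a lower bound for $q(\rho)$. Since $v^{*}$ loses iff some vertex $u$ has $s(u)>S$, and a vertex of in‑degree below $\delta D$ has $s(u)<\delta D\le S$ once $\rho$ is bounded away from $0$, only vertices of in‑degree $\ge\delta D$ can dethrone $v^{*}$; for each such $u$ its score concentrates, given its own position, around (that position)$\,\cdot\,\mathrm{indeg}(u)$, so $s(u)>S$ essentially requires $u$ to be placed past the fraction $\rho D/\mathrm{indeg}(u)$ of $\pi$. Controlling the probability of this and integrating $q(\rho)$ back over $\rho\in(0,1]$ is what yields the constant $\tfrac34$; rescaling $\epsilon$ at the end absorbs the $O(\epsilon)D$ and $o(D)$ slack.

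I expect the decisive difficulty to be exactly this last optimisation: a ``competitor'' vertex of intermediate in‑degree $cD$ can legitimately out‑score $v^{*}$ while being far from popular, and any crude treatment of it is lossy --- it would leave a gap (yielding only something like $\tfrac78$) --- so one must keep the joint placement of $v^{*}$ and the competitors and carry it through the integral, identifying the worst configuration, which I anticipate is a scaled copy of the $\Delta^{-}=2$ lower‑bound graph with $v^{*}$'s in‑neighbours themselves given in‑degree roughly $\tfrac12 D$, so that the competitor occurs with precisely the extremal frequency; this at the same time shows $\tfrac34$ is unimprovable. Two further points require care. First, keeping every concentration and union bound at a threshold independent of $n$ --- adding isolated vertices freely if convenient, since they never change the outcome --- rather than one that degrades as $n$ grows. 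Second, pinning down which monotone consequences of the mechanism's (impartiality‑preserving) selection and tie‑breaking rule are actually invoked --- essentially only $\mathrm{indeg}(w)\ge s(w)$, ``$w$ carries score at least $s(v^{*})$'', and ``$w=v^{*}$ unless some other vertex out‑scores $v^{*}$'' --- and verifying these from the description of the mechanism in Section~\ref{sec:rand}, so that the remainder of the argument is insensitive to the rule's finer details.
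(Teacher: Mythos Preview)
Your lower bound when $v^{*}$ loses is too weak to reach $\tfrac34$. You use $\mathrm{indeg}(w)\ge S\approx\rho D$ there, but on the tight construction of Theorem~\ref{thm:tight} (where $v^{*}$ has in-degree $D$ and many other vertices have in-degree $\approx D/2$) one gets $q(\rho)\approx\mathbf{1}[\rho>\tfrac12]$, and your integral $\int_0^1\bigl[\rho+(1-\rho)q(\rho)\bigr]\,d\rho$ evaluates to $\tfrac58$, not $\tfrac34$. The missing $\tfrac18$ is lost in the regime $\rho<\tfrac12$: there $v^{*}$ typically loses to a competitor whose score is $\approx D/2$, yet you only credit the winner with $\rho D<D/2$. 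No pointwise lower bound on $q(\rho)$ can repair this, because $q(\rho)$ really is essentially zero on $[0,\tfrac12]$ in that example; your proposed competitor-by-competitor analysis of $q(\rho)$ would confirm exactly that and then be stuck.

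The paper avoids this by reversing the conditioning. Instead of fixing the position $\rho$ of $v^{*}$, it fixes the entire restriction $\pi'=\pi|_{V\setminus\{v^{*}\}}$, runs the mechanism on $G\setminus v^{*}$ with order $\pi'$, and records the winning left-score $x$ obtained there. Two facts then do all the work: inserting $v^{*}$ can only raise left-degrees, so the winner in $G$ always satisfies $\mathrm{indeg}(w)\ge x$; and if $v^{*}$ lands at a position where at least $x+1$ of its in-neighbours precede it, $v^{*}$ becomes the winner outright. The balanced-permutation lemma (your Hoeffding step, applied once to $\pi'$) makes the latter probability at least $1-(x+1)/D-\eps$, so the conditional expectation is at least
\[
x\cdot\frac{x}{D}+D\cdot\Bigl(1-\frac{x}{D}\Bigr)-O(\eps D)\;=\;D-x+\frac{x^{2}}{D}-O(\eps D),
\]
minimised at $x=D/2$ to $\tfrac34 D-O(\eps D)$. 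The crucial point your outline misses is that the \emph{same} quantity $x$ is simultaneously the threshold deciding whether $v^{*}$ wins and the floor on the outcome when it loses; decoupling these two roles into $q(\rho)$ and $S$ discards exactly the correlation that produces the quadratic, and cannot recover the constant $\tfrac34$.
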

This result is tight. We show that the permutation mechanism cannot produce a guarantee better 
than  $\frac34$ regardless of the magnitude of $\Delta^-$.

This result suggests that it may be possible to find a mechanism that beats the $\frac{3}{4}$-inapproximability
bound of \cite{FK14}, even for constant maximum in-degree. This is indeed the case and spectacularly so.
There is an impartial mechanism, which we call the {\em slicing mechanism}, that produces
a near optimal approximation guarantee.
\begin{theorem}\label{thm:slice}
For any $\epsilon>0$, there is a constant $M_{\epsilon}$ such that if $\Delta^- \ge M_{\epsilon}$ then
the slicing mechanism is $(1-\epsilon)$-optimal.
\end{theorem}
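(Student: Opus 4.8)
The plan is to treat the two assertions about the slicing mechanism separately: impartiality, which is a structural feature of the mechanism, and the $(1-\epsilon)$ guarantee, which is the substantive claim and is driven by concentration of measure once $\Delta^- \ge M_\epsilon$.

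Impartiality I would prove by the template that handles the $2$-partition and permutation mechanisms. Fix a vertex $v$ and show that $\Pr[v\text{ is selected}]$ is a function of $G$ that does not involve any arc leaving $v$: the mechanism assigns $v$ to its slice(s) with its own randomness, independently of $v$'s report; the statistic it attaches to $v$ counts only arcs \emph{into} $v$, and only from slices lying ``ahead'' of $v$'s slice; and the comparison that decides the winner never lets $v$ push a rival below $v$'s own --- already determined --- statistic, because the only arcs $v$ could misreport are ones that are never credited to a vertex $v$ could overtake. Concretely this comes down to producing, for every $G'$ differing from $G$ only in the out-arcs of $v$, a measure-preserving bijection between the runs of the mechanism that select $v$ on $G$ and those that select $v$ on $G'$.

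For the approximation ratio, fix $G$ with $\Delta^-=\Delta\ge M_\epsilon$ and a vertex $v^*$ with $\deg^-(v^*)=\Delta$. The key point is that $\Delta$ is large: the number of in-arcs of $v^*$ falling in any prescribed family of slices is a sum of $\Delta$ (negatively associated) indicators, hence lies within a $(1\pm\eta)$ factor of its mean except with probability $e^{-\Omega(\eta^2\Delta)}$, which we push below $\epsilon$ by enlarging $M_\epsilon$. From this I would deduce that, with probability at least $1-\epsilon/2$, either $v^*$ itself or some vertex whose true in-degree exceeds $(1-\epsilon)\Delta$ emerges as a viable candidate carrying a near-maximal score, so that the mechanism returns a vertex which the same concentration estimates certify is nearly as popular as $v^*$; this pins down $\mathbb{E}[\deg^-(\text{selected})]$ on the ``good'' event.

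The heart of the proof --- and the reason a single random partition (as in the $2$-partition mechanism) cannot be boosted past a constant --- is controlling the complementary event that the mechanism is lured into selecting a vertex of in-degree far below $\Delta$. A fixed such ``decoy'' can only win if its sampled score overshoots its mean by a constant factor, which has probability $e^{-\Omega(\epsilon^2\Delta)}$; but a graph may contain arbitrarily many decoys, so the naive union bound over the (up to $\Theta(n)$) vertices is useless when $n \gg e^{\Delta}$. This is what the slicing structure is for: a would-be winner must look dominant across several essentially independent slices simultaneously --- and where the slices disagree about who is dominant, the mechanism must fall back to a rule that behaves like exact maximum-in-degree selection precisely on the configurations in which genuine disagreement can arise --- so that a decoy is rejected unless it survives many independent tests \emph{and} beats every genuinely popular vertex, which the concentration bounds forbid. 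I would make this rigorous not vertex by vertex but level by level: for each popularity level $t<(1-\epsilon)\Delta$, bound the \emph{total} probability the mechanism places on vertices of in-degree about $t$, show it is negligible, and sum over the at most $\Delta$ relevant levels; together with the ``good''-event bound this gives $\mathbb{E}[\deg^-(\text{selected})]\ge(1-\epsilon)\Delta^-$. The subtle part throughout is dovetailing this robustness with impartiality: the fall-back rule has to be grafted onto the main mechanism so that no manipulation opens up at the junction.
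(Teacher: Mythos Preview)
Your proposal misreads the structure of the slicing mechanism, and this leads the approximation argument astray. The slices are not independent random partitions, and the mechanism does not require a candidate to ``look dominant across several essentially independent slices simultaneously.'' Rather, a single small random sample $\mathcal{X}$ is drawn once; the in-degree of every unsampled vertex is \emph{estimated} from $\mathcal{X}$; and the slices are then a \emph{deterministic} partition of $V\setminus\mathcal{X}$ into contiguous blocks of estimated in-degree. The election phase processes these slices in increasing order of estimated degree, revealing (almost all) vertices in each slice before moving to the next --- so it is essentially the permutation mechanism run on a carefully chosen, non-random order in which high-degree vertices come last.

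Consequently the ``decoy'' worry you describe, and the level-by-level union bound over popularity levels, are not the right tools. A vertex of in-degree $t\ll\Delta$ is, with overwhelming probability, placed in an early slice and eliminated long before the top slices are processed; no aggregation over $\Theta(n)$ competitors is needed, and there is no ``fall-back rule'' to graft on. The actual argument is a local case analysis around a maximum-in-degree vertex $x$, conditioned on the event that $x$ and all high-degree vertices in $N^{-2}(x)$ have accurate estimates (this is where Chernoff enters, and the union bound is only over the $O(\Delta^3)$ vertices of $N^{-3}(x)$, not over all of $V$). If most in-neighbours of $x$ lie in slices strictly before $x$'s slice $S_k$, then when $S_k$ is reached nearly all of $x$'s votes have been revealed and the provisional leader already has degree close to $\Delta$. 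Otherwise many in-neighbours of $x$ (or vertices of $N^{-2}(x)$) share $x$'s slice or the next one; being in the same slice forces their true in-degree to be at least $(1-O(\eps))\Delta$, and the extra $(1-\eps)$ subsampling inside that slice guarantees, with high probability, that at least one of them is still eligible when the slice's provisional leader is chosen. Your impartiality sketch is fine --- out-arcs are only consulted after elimination --- but the quantitative analysis needs to be rebuilt around this ordered-by-estimate structure rather than around independent tests.
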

The slicing mechanism differs from previous mechanisms in that it adds an initial sampling phase. In this first phase, it samples 
a small fraction of the vertices. It then uses the votes of these vertices to build a non-random ordering of the other vertices.
This specific ordering is exploited in the second phase to elect a vertex with very high expected in-degree.


These results, as in previous works \cite{AFP11,FK14,HM13}, relate to single-winner elections.
Some of the motivating applications, however, involve multiple-winner elections.
We remark that our main result can be generalized 
to multiple-winner elections via small modifications to the mechanism.

\section{The Model}\label{sec:model}
We begin by formalizing the impartial selection problem and introducing 
some necessary notation. 
An election is represented via a directed graph $G=(V, A)$.
The number of vertices of $G$ is denoted by $n$, and each vertex represents
an agent (voter/candidate). An agent can make multiple votes, but cannot vote for herself 
nor vote more than once for any other agent.
Thus, the graph $G$ is loopless and contains no multiple arcs.

A vertex $u$ is 
an \emph{in-neighbor of $v$} if there is an arc $uv\in A$. In this case, we say that $u$ votes for $v$.
Given a subset $Y \subseteq V$ and $v \in V$, the \emph{in-degree of $v$ in $Y$}, denoted by $d^-_Y(v)$, is 
the number of in-neighbors of $v$ 
in $Y$. For simplicity, we denote $d^-_V(v)$, the \emph{in-degree of $v$}, by $d^-(v)$. 
The maximum in-degree of any vertex in $G$ is denoted in by $\Delta^-(G)$, or simply by $\Delta$ 
when there is no ambiguity.

A mechanism is \emph{impartial} if, for every vertex $v$, the probability of selecting $v$ is not modified 
when the out-neighborhood of $v$ is modified. That is, if $v$ changes its votes then this does not affect the 
probability of $v$ being elected. More formally, take any pair of graphs $G$ and $G'$ on the same vertex set $V$.
Let $v$ be a vertex. Then we require that the probability that $v$ is elected in $G$ 
is equal to the probability that $v$ is elected in $G'$, whenever
$N^+_G(u)=N^+_{G'}(u)$ for every $u \neq v$.

Given $1\ge \alpha \ge 0$, an impartial mechanism is \emph{$\alpha$-optimal} if for any graph $G$, the expected 
degree of the winner differs from the maximum degree by a factor of at most $\alpha$, that is,
\[ \frac{\sum_{v \in V} d^-(v) \cdot \brm{Pr}(v\textrm{ is the winner})}{\Delta} \geq \alpha\]
Finally, given an integer $k$, the set $\{1,2,\ldots,k\}$ is denoted by $[k]$.

\section{The Permutation Mechanism}\label{sec:rand}

In this section, we analyze the permutation mechanism of Fisher and Klimm \cite{FK14}. 
This election mechanism examines the vertices in a random order $\{\pi_1, \pi_2,\dots, \pi_n\}$.
At time $t$, the mechanism selects a {\em provisional leader} $y_t$ from the amongst the 
set $\Pi_t=\{\pi_1,\dots \pi_t\}$. At time $t+1$ the mechanism then examines $\pi_{t+1}$. If $\pi_{t+1}$
receives at least as many votes as $y_t$ from $\Pi_t\setminus y_t$ then
$\pi_{t+1}$ is declared the provisional leader $y_{t+1}$. Otherwise $y_{t+1}:=y_t$. The winner
of the election is $y_n$. A formal description of the permutation mechanism is given in 
Procedure \ref{alg:perm-mech}.
\begin{algorithm}
\caption{The Permutation Mechanism}
\label{alg:perm-mech}
\begin{algorithmic}[PERF]
\STATE {\bf Input:} A directed graph $G=(V,A)$. 
\STATE Let $\pi$ be a random permutation of $V=[n]$.
\STATE $y_1\gets \pi_1$;
\FOR{$i=1$ to $n-1$}
   \IF{$d^-_{\Pi_i\setminus \{y_i\}}(\pi_{i+1}) \ge  d^-_{\Pi_i\setminus \{y_i\}}(y_{i})$}
  \STATE $y_{i+1} \gets \pi_{i+1}$
   \ELSE
   \STATE $y_{i+1} \gets y_{i}$ 
   \ENDIF
\ENDFOR
\STATE {\bf output} $y_n$
\end{algorithmic}
\end{algorithm}


Observe that the permutation mechanism is impartial because it has the following property:
the votes of a vertex are only considered after it has been eliminated. Specifically,
the votes of $\pi_t$ are considered at time $\tau > t$ only if $\pi_t$ is not the provisional
leader at time $\tau-1$. But, if $\pi_t$ is not the provisional leader at time $\tau >t$
then it cannot be elected.
This ensures that eliminated agents have no interest to lie, \emph{i.e.} the mechanism is impartial.
Fisher and Klimm \cite{FK14} proved this mechanism is $\frac12$-optimal 
using an intricate analysis based upon viewing the permutation mechanism as
a generalization of the $2$-partition mechanism.
First, we present a simpler proof of their result.
\begin{theorem}\cite{FK14}\label{thm:rand-half}
The permutation mechanism is $\frac12$-optimal.
\end{theorem}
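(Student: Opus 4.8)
The plan is to fix a most popular vertex $w$ with $d^-(w)=\Delta$ and argue that, conditioned on almost any position of $w$ in the random permutation $\pi$, the provisional leader at the moment just before $w$ is examined already has in-degree at least $\tfrac12\Delta$ among the vertices preceding it — or else $w$ itself becomes the leader when it is examined and stays close to $\Delta$. More precisely, I would set up the standard coupling/exposure argument: reveal the permutation one vertex at a time, and track the in-degree of the provisional leader $y_t$ restricted to $\Pi_t\setminus\{y_t\}$. The key monotonicity observation is that this counted in-degree of the provisional leader is nondecreasing in $t$: whenever the leader changes from $y_i$ to $\pi_{i+1}$ it does so only because $\pi_{i+1}$ has counted in-degree at least that of $y_i$, and when it does not change, adding $\pi_{i+1}$ to $\Pi_i$ can only (weakly) increase $d^-_{\Pi_i\setminus\{y_i\}}(y_i)$. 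So it suffices to lower bound the counted in-degree of the final leader $y_n$.

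For the bound itself, I would condition on the position $k$ of $w$ in $\pi$ and consider the in-neighbors of $w$. By the time $w$ is examined (at step $k$), a random subset of $w$'s in-neighbors — namely those appearing among $\pi_1,\dots,\pi_{k-1}$ — have already been placed. If $w$ is made the provisional leader at step $k$, then from that point on the counted in-degree of the leader is at least $d^-_{\Pi_{k-1}\setminus\{y_{k-1}\}}(w)$, which is at least the number of in-neighbors of $w$ among $\pi_1,\dots,\pi_{k-1}$ minus one. If $w$ is \emph{not} made the provisional leader at step $k$, then the incumbent $y_{k-1}$ already had counted in-degree at least that of $w$ at that step, so again the leader's counted in-degree is at least (number of in-neighbors of $w$ in $\Pi_{k-1}$) $-1$. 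Either way, $d^-_{\Pi_n\setminus\{y_n\}}(y_n)$ — and hence the true in-degree $d^-(y_n)$ of the winner — is at least $|\{\text{in-neighbors of }w\}\cap\Pi_{k-1}|-1$.

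Now I would take expectations over $\pi$. Each of the $\Delta$ in-neighbors of $w$ lies in $\Pi_{k-1}$ (i.e.\ precedes $w$) with probability $\tfrac{k-1}{n-1}$ given that $w$ is in position $k$; and $k$ itself is uniform on $[n]$, so unconditionally each in-neighbor precedes $w$ with probability $\tfrac12$. Hence $\mathbb{E}\big[|\{\text{in-neighbors of }w\}\cap\Pi_{k-1}|\big] = \tfrac{\Delta}{2}$, giving $\mathbb{E}[d^-(y_n)] \ge \tfrac{\Delta}{2} - 1$. To remove the additive $-1$ and get a clean factor-$\tfrac12$ statement, I would sharpen the accounting slightly: the ``$-1$'' only appears in the case where $w$ is itself chosen as a leader and we discard $w$'s own contribution, but in that case $w$'s own in-degree counted later can be taken from $\Pi_n\setminus\{y_n\}$ which includes one extra vote (the would-be discarded one reappears once $w$ is no longer leader), or alternatively one handles the position $k=n$ and $k=1$ boundary cases directly. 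A cleaner route: observe that when $w$ is examined, if it does not become leader the incumbent has counted degree $\ge d^-_{\Pi_{k-1}}(w)$ (no $-1$, since $w\notin\Pi_{k-1}$), and if it does become leader then eventually either it wins with true in-degree $\ge |\{\text{in-nbrs of }w\}\cap \Pi_{k-1}|$ or it is dethroned by someone with at least that counted degree. This removes the loss, and taking expectations yields $\mathbb{E}[d^-(y_n)]\ge \tfrac{\Delta}{2}$, i.e.\ $\tfrac12$-optimality.

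The main obstacle is precisely this last bookkeeping subtlety — correctly handling the vote of $w$ on itself and the tie-breaking/boundary cases so that the clean constant $\tfrac12$ (rather than $\tfrac12-o(1)$) comes out, since a naive argument leaves an additive $-1$. Everything else is a routine linearity-of-expectation computation once the monotonicity of the provisional leader's counted in-degree is established.
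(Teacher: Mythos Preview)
Your approach is essentially the same as the paper's: fix a maximum-degree vertex, condition on how many of its in-neighbors precede it, do the case split on whether it becomes provisional leader, and use monotonicity of the leader's counted degree. The paper streamlines the bookkeeping you flag as the ``main obstacle'': if exactly $j$ in-neighbors precede $v$ and $v$ does \emph{not} become leader, then (since ties favour the newcomer) the incumbent has strictly more than $v$'s $\ge j-1$ counted votes, hence $\ge j$; if $v$ \emph{does} become leader, the previous leader's vote is now counted, so $v$ itself has $j$ counted votes. Either way the winner has $\ge j$. For the expectation the paper observes directly that $j$ is uniform on $\{0,\dots,\Delta\}$, giving $\frac{1}{\Delta+1}\sum_{j=0}^{\Delta} j=\frac{\Delta}{2}$, which is equivalent to your linearity computation. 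One small correction: in your ``cleaner route'', the parenthetical ``no $-1$, since $w\notin\Pi_{k-1}$'' misidentifies the source of the $-1$ --- it comes from the possibility that $y_{k-1}$ is an in-neighbor of $w$, not from $w$ itself; the strict inequality from tie-breaking is what actually kills it.
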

\begin{proof}
Let $v$ be a vertex with maximum in-degree $\Delta$.
Now suppose exactly $j$ of its neighbors
appear before $v$ in the random ordering $\pi$. 
In this case, at the time $v$ is considered it has received
at least $j-1$ valid votes (one of the $j$ votes may not be counted
if it comes from the provisional leader).

Suppose $v$ is now declared the provisional leader.
Then all $j$ of these votes become valid. 
(Indeed, if one of the in-neighbors of $v$ was the provisional leader,
this is no longer the case.)
On the other-hand, suppose $v$ is now declared a loser.
Then, because ties are broken in favor of newly examined
vertices, the provisional leader must already be receiving at least
$j$ valid votes. Thus in either case, the final winner $y_n$ must also
receive at least $j$ valid votes (the in-degree of the provisional leader is non-decreasing).

Now with probability $\frac{1}{\Delta+1}$, exactly $j$ of its neighbors
appear before $v$, for any $0\le j\le \Delta$. Thus, in expectation, 
the winner receives at least 
$\frac{1}{\Delta+1}\cdot \sum_{j=0}^{\Delta} j = \frac12 \Delta$ votes. 
\qed
\end{proof}
As discussed in the introduction, the factor $\frac12$-approximation guarantee in Theorem \ref{thm:rand-half} is tight. 
This tightness is slightly misleading, though. Recall that  the tight example was a graph with just a single arc. 
In general the permutation mechanism is $\frac34$-optimal. Specifically,

\restatethm{\ref{thm:permutation}}{For any $\epsilon>0$, there is a constant $N_{\epsilon}$ such 
that if $\Delta^- \ge N_{\epsilon}$ then
the permutation mechanism is $(\frac34-\epsilon)$-optimal.}

The $\frac34$ bound in Theorem \ref{thm:permutation} is tight
in a very strong sense. There are tight examples for any choice of $\Delta$, no matter how large; see Theorem \ref{thm:tight}.
The proof of Theorem \ref{thm:permutation} has two basic components. The first is the basic observation, 
used above in the proof of Theorem \ref{thm:rand-half},
that the mechanism will perform well if the vertex $v$ of highest in-degree has many neighbors before it in the
random permutation. The second is the observation that if the mechanism does well when $v$ does not 
participate then it will do at least as well when $v$ does participate. In order to be able to apply these
two observations simultaneously, however, we must show random permutations are "well-behaved".
Specifically, we say that a permutation $\pi$ of $[n]$ is \emph{$(\Delta,\eps)$-balanced} if, for every $0 \leq k \leq n$,
$$|[\Delta] \cap \Pi_{k}| \geq \left( \frac k n -\eps \right)\cdot \Delta$$ 
and we want to show that a random permutation is typically balanced. 

To do this, we need the following result, which provides a 
large deviation bound for the size of intersection of two sets of 
fixed cardinalities. 
\begin{lemma}\label{lem:prob} 
For every  $\eps_1>0$, there exists $N_1$ such that for all positive
 integers $N_1<\Delta \leq n$ and $k \leq n$ the following holds. If $X \subseteq [n]$ are chosen uniformly 
 at random subject to $|X|=\Delta$,  then 
\begin{equation}\label{eq:prob1}
\brm{Pr}\left[\left||X \cap [k]| - \frac{k\Delta}{n}\right| \geq \eps_1\cdot \Delta \right] <\eps_1 
\end{equation}
\end{lemma}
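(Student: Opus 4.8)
The plan is to recognize that $Y := |X \cap [k]|$ is a hypergeometric random variable and then control its deviations from the mean by a second-moment argument. Concretely, choosing a uniformly random $\Delta$-subset $X$ of $[n]$ and intersecting it with the fixed set $[k]$ is exactly sampling $\Delta$ balls without replacement from an urn containing $k$ ``marked'' balls and $n-k$ ``unmarked'' ones. Hence $\mathbb{E}[Y] = \frac{k\Delta}{n}$, which is precisely the quantity appearing in \eqref{eq:prob1}, so it remains only to show that $Y$ concentrates around this value once $\Delta$ is a large enough constant.

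For this I would invoke the standard formula for the variance of the hypergeometric distribution,
$$\mathrm{Var}(Y) \;=\; \Delta\cdot\frac{k}{n}\cdot\frac{n-k}{n}\cdot\frac{n-\Delta}{n-1}.$$
The key point is that the right-hand side is bounded \emph{uniformly} in $k$ and $n$: since $\frac{k}{n}\cdot\frac{n-k}{n}\le\frac14$ and $\frac{n-\Delta}{n-1}\le 1$, we get $\mathrm{Var}(Y)\le \Delta/4$. Chebyshev's inequality then yields
$$\brm{Pr}\!\left[\,\Bigl|Y-\tfrac{k\Delta}{n}\Bigr|\ge \eps_1\Delta\,\right]\;\le\;\frac{\mathrm{Var}(Y)}{\eps_1^2\Delta^2}\;\le\;\frac{1}{4\eps_1^2\Delta}.$$

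It now suffices to choose $N_1$ large enough, depending only on $\eps_1$, that $\frac{1}{4\eps_1^2 N_1}\le \eps_1$ — for instance $N_1 := \lceil 1/(4\eps_1^3)\rceil$. Then for every admissible triple with $\Delta > N_1$ the right-hand side above is strictly less than $\eps_1$, as required, and the bound holds uniformly over all $n \ge \Delta$ and all $k \le n$. The degenerate cases $k\in\{0,n\}$ are trivial, since there $Y$ is deterministic and equal to $\frac{k\Delta}{n}$.

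I do not expect a genuine obstacle here; the only thing to get right is a clean variance bound that is uniform in $k$ and $n$, which is exactly what the factors $\tfrac14$ and $\tfrac{n-\Delta}{n-1}\le 1$ provide. If one preferred an exponential tail (not needed for the quantitative estimate asked for), one could instead appeal to Hoeffding's inequality, which applies verbatim to sampling without replacement, to get $\brm{Pr}[\,|Y-\tfrac{k\Delta}{n}|\ge \eps_1\Delta\,]\le 2e^{-2\eps_1^2\Delta}$ and pick $N_1$ accordingly; but Chebyshev already suffices and keeps the argument self-contained.
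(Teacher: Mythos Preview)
Your argument is correct and takes a genuinely different route from the paper's own proof. You work directly with the hypergeometric law of $|X\cap[k]|$, bound its variance by $\Delta/4$ uniformly in $k$ and $n$, and finish with Chebyshev, obtaining the explicit threshold $N_1=\lceil 1/(4\eps_1^3)\rceil$. The paper instead compares sampling without replacement to i.i.d.\ Bernoulli($p$) sampling with $p=\Delta/n$: it shows, via a Stirling estimate, that $\Pr[|Y|=\Delta]\ge 1/(3\sqrt{\Delta})$ for the i.i.d.\ model $Y$, so that conditioning on $|Y|=\Delta$ costs at most a factor $3\sqrt{\Delta}$; it then applies the Chernoff bound to $|Y\cap[k]|$ and absorbs the $3\sqrt{\Delta}$ factor into the exponential tail. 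Your approach is shorter and entirely self-contained, avoiding both Stirling and the conditioning trick; the paper's approach yields an exponential tail (and hence an asymptotically smaller $N_1$, of order $\eps_1^{-2}\log(1/\eps_1)$), but that strength is not needed for the lemma as stated. Your closing remark about Hoeffding for sampling without replacement would in fact give the exponential tail more directly than the paper's detour through i.i.d.\ sampling.
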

\begin{proof} See Appendix. \qed
\end{proof}

\begin{lemma}\label{lem:balanced} For every $0<\eps_2<1$, there exists $N_2$ such that, for all $n \geq \Delta > N_2$, at 
least $(1-\eps_2)\cdot n!$ permutations of $[n]$ are $(\Delta,\eps_2)$-balanced.
\end{lemma}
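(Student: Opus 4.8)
The plan is to deduce Lemma~\ref{lem:balanced} from Lemma~\ref{lem:prob} by a union bound over the relevant values of $k$, after first reducing the continuum of constraints in the definition of $(\Delta,\eps_2)$-balanced to a finite, sparse set of ``checkpoints''. First I would observe that the map $k \mapsto |[\Delta]\cap\Pi_k|$ is non-decreasing and increases by at most one at each step, while the target function $k \mapsto (\frac{k}{n}-\eps_2)\Delta$ is linear with slope $\Delta/n \le 1$. Hence if the balancedness inequality holds at two consecutive checkpoints $k_{i} \le k \le k_{i+1}$ with $k_{i+1}-k_i$ not too large, it automatically holds for all intermediate $k$ after relaxing $\eps_2$ slightly: indeed, for $k$ in this range, $|[\Delta]\cap\Pi_k| \ge |[\Delta]\cap\Pi_{k_i}| \ge (\frac{k_i}{n}-\eps')\Delta \ge (\frac{k}{n}-\eps')\Delta - \frac{k_{i+1}-k_i}{n}\Delta$, so choosing $\frac{k_{i+1}-k_i}{n} \le \eps_2/2$ (say, checkpoints spaced $\lfloor \eps_2 n/2\rfloor$ apart, giving $O(1/\eps_2)$ of them) and $\eps' = \eps_2/2$ suffices.

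Next I would apply Lemma~\ref{lem:prob} at each checkpoint. The key point is that for a uniformly random permutation $\pi$ of $[n]$, the set $X := \{i : \pi^{-1}(i) \le k\}$ — i.e.\ the image of $[\Delta]$ under the inverse permutation, intersected with the first $k$ positions, viewed the other way: the set of the first $k$ vertices restricted to $[\Delta]$ — is distributed exactly as a uniformly random $\Delta$-subset... more carefully: $|[\Delta]\cap\Pi_k|$ has the same distribution as $|X\cap[k]|$ where $X$ is a uniform $\Delta$-subset of $[n]$, by symmetry (equivalently, $\Pi_k$ is a uniform $k$-subset of $[n]$ and we intersect with the fixed set $[\Delta]$; swapping the roles of the fixed and random set gives the formulation of Lemma~\ref{lem:prob}). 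So for each checkpoint $k_i$, Lemma~\ref{lem:prob} with parameter $\eps_1 := \eps_2/2$ (and $N_1$ the corresponding constant) gives that the one-sided bound $|[\Delta]\cap\Pi_{k_i}| \ge (\frac{k_i}{n} - \frac{\eps_2}{2})\Delta$ fails with probability less than $\eps_2/2$.

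Then I would take the union bound over the $O(1/\eps_2)$ checkpoints. This is the one genuinely delicate point: the naive union bound gives failure probability $O(1/\eps_2)\cdot(\eps_2/2)$, which is a constant but not necessarily smaller than $\eps_2$. To fix this, I would apply Lemma~\ref{lem:prob} with a smaller parameter, say $\eps_1 := \eps_2^2/C$ for a suitable absolute constant $C$ governing the checkpoint count, so that the sum over all checkpoints of the individual failure probabilities is at most $\eps_2$. One must check this is consistent: shrinking $\eps_1$ only tightens the deviation bound in Lemma~\ref{lem:prob} (at the cost of a larger threshold $N_1$), and the checkpoint spacing argument of the first paragraph used the slack $\eps_2/2$, which is unaffected — we just need $\eps_1 \le \eps_2/2$ as well, which holds for $\eps_2$ small. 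Setting $N_2 := N_1(\eps_2^2/C)$ from Lemma~\ref{lem:prob} then yields: for all $n \ge \Delta > N_2$, the fraction of permutations that fail to be $(\Delta,\eps_2)$-balanced is at most the probability that some checkpoint inequality fails, which is less than $\eps_2$; equivalently at least $(1-\eps_2)n!$ permutations are $(\Delta,\eps_2)$-balanced.

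The main obstacle, as noted, is bookkeeping the $\eps$'s so that the union bound closes with the right final constant $\eps_2$ rather than a constant multiple of it; this is handled entirely by choosing the parameter fed into Lemma~\ref{lem:prob} quadratically small in $\eps_2$. A secondary, purely cosmetic, point is the translation between ``random $k$-subset intersected with fixed $\Delta$-set'' and the statement of Lemma~\ref{lem:prob} as written (``random $\Delta$-subset intersected with fixed $[k]$''); these coincide by the symmetry of the hypergeometric distribution, so no real work is needed there. I would also handle the trivial boundary cases $k=0$ and $k=n$ separately — at $k=0$ both sides are $\le 0$ and at $k=n$ the left side is $\Delta \ge (1-\eps_2)\Delta$ — so they never contribute to the failure event.
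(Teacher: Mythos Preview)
Your proposal is correct and follows essentially the same approach as the paper: reduce to $O(1/\eps_2)$ checkpoints spaced $O(\eps_2 n)$ apart, apply Lemma~\ref{lem:prob} at each with parameter $\eps_1$ of order $\eps_2^2$ so that the union bound closes at $\eps_2$, and use monotonicity of $k\mapsto|[\Delta]\cap\Pi_k|$ to interpolate. The paper makes the choices $\eps_1=\eps_2^2/4$ and spacing $\eps_2 n/3$ directly, without the intermediate ``false start'' with $\eps_1=\eps_2/2$; your observation about the hypergeometric symmetry linking $\Pi_k\cap[\Delta]$ to the setup of Lemma~\ref{lem:prob} is a point the paper leaves implicit.
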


\begin{proof} Let $N_1$ be chosen to satisfy Lemma~\ref{lem:prob} with $\eps_1 :=\frac{\eps^2_2}{4}$. We 
choose $N_2  \geq \max(N_1,\frac{12}{\eps_2})$.
Let $k_1,k_2,\ldots,k_l \in [n]$ be a collection of integers such that for every $k \in [n]$ there exists $i \in [l]$ 
satisfying $0 \leq k-k_i \leq \eps_2\cdot \frac{n}{3}$. Clearly such a collection can be chosen with 
$l \leq \frac{n}{\lfloor \eps_2\cdot n/3 \rfloor} \leq \frac{4}{\eps_2}$, where the last inequality holds 
as $n \geq \frac{12}{\eps_2}$. Let $\pi$ be 
a permutation of $[n]$ chosen uniformly at random.
By Lemma~\ref{lem:prob} and the choice of $N_2$ we have
\begin{equation}\label{eq:balanced1}
\brm{Pr}\left[|[\Delta] \cap \Pi_{k_i}|\leq \left( \frac{k_i}{n} - \eps_1\right)\cdot \Delta \right] < \eps_1
\end{equation} 
for every $1 \leq i \leq l$. 

We claim that if $|[\Delta] \cap \Pi_{k_i}| \geq \left( \frac{k_i}{n} - \eps_1\right)\cdot \Delta$ 
for every $0 \leq i \leq l$ then $|[\Delta] \cap \Pi_{k}| \geq \left( \frac k n -\eps_2 \right)\cdot \Delta$ for every $0\leq k \leq n$. 
Indeed, given $k$ let $i \in [l]$ satisfy $0 \leq k-k_i \leq \eps_2\cdot \frac{n}{3}$. Then
\begin{eqnarray*}
|[\Delta] \cap \Pi_k| &\geq& |[\Delta] \cap \Pi_{k_i}| \\ 
&\geq& \left( \frac{k_i}{n} - \eps_1\right)\cdot  \Delta \\ 
&\geq& \left( \frac{k}{n} -\frac{\eps_2}{3} - \eps_1 \right)\cdot \Delta \\
&=& \left( \frac{k}{n} -\frac{\eps_2}{3} - \frac{\eps_2^2}{4} \right)\cdot \Delta \\
&\geq& \left( \frac k n -\eps_2 \right)\cdot \Delta
\end{eqnarray*}
as claimed.
By the union bound applied to (\ref{eq:balanced1}) we have
\begin{eqnarray}\label{eq:balanced2}
\brm{Pr}\left[\forall i: i\le l \: :\: |[\Delta] \cap \Pi_{k_i}| \ge \left( \frac{k_i}{n} - \eps_1\right)\cdot \Delta \right] & \ge & 1- l\cdot \eps_1\notag\\
&\geq&  1 - \frac{4}{\eps_2}\cdot \frac{\eps_2^2}{4} \notag \\
&=& 1-\eps_2
\end{eqnarray}
The lemma immediately follows from (\ref{eq:balanced2}) and the claim above.~\qed
\end{proof}

We are now ready to prove Theorem~\ref{thm:permutation}. 

\vskip 5pt \noindent \emph{Proof of Theorem~\ref{thm:permutation}.}
Let $N_2$ be chosen to satisfy Lemma~\ref{lem:balanced} with $\eps_2:=\frac{\eps}{3}.$ We show 
that $N_{\eps}:=\max(N_2,\lceil \frac{6}{\eps_2} \rceil)$ satisfies the theorem.
Let $v$ be a vertex of $G$ with in-degree $\Delta:=\Delta^{-}(G)$. We assume that $V(G)=[n]$, 
where $v$ is vertex $n$, and $[\Delta]$ is the set of neighbors of $v$. For a permutation $\pi$, let $d(\pi)$ denote 
the in-degree of the winner determined by the mechanism.

Let $\pi'$ be a fixed $(\Delta,\eps_2)$-balanced permutation of $[n-1]$. We claim that
\begin{equation}\label{eq:main}
\brm{E}[ d(\pi) \:|\: \pi|_{[n-1]}=\pi'] \ \ \geq \ \ \left(\frac 34 - \eps/2 \right)\cdot \Delta
\end{equation}
Note that the theorem follows from (\ref{eq:main}), as the probability that $\pi|_{[n-1]}$ is not $(\Delta,\eps_2)$-balanced 
is at most $\eps_2$ by the choice of $N_{\eps}$, and thus 
\begin{equation*}
\brm{E}[d(\pi)] \ \ \geq\ \  (1- \eps_2) \cdot \left(\frac 34 - \eps/2 \right)\cdot\Delta 
\ \ \geq\ \  \left( \frac 34 - \eps \right)\cdot \Delta
\end{equation*}
It remains to prove (\ref{eq:main}). Let $w$ be the winner when the permutation mechanism is applied 
to $G \setminus v$ and $\pi'$, and let $x$ be the number of votes $w$ receives from its left 
(\emph{i.e.} from vertices before it in the permutation). 
Let $\pi$ be a permutation  of $[n]$ such that $\pi|_{[n-1]}=\pi'$. 
It is not hard to check that if at least $x+1$ neighbors of $v$ precede $v$ 
in $\pi$ then $v$ wins the election. Moreover, whilst the addition of vertex $v$ can change the winner (and, indeed, produce a 
less popular winner), it cannot decrease the ``left'' degree of any vertex. Thus, the (new) winner has still in-degree at least $x$ 
after the addition of $v$.
So, as $\pi'$ is $(\Delta,\eps_2)$-balanced, we have $|[\Delta] \cap \Pi_{cn}| 
\geq c\Delta -\eps_2\Delta \geq x+1,$  whenever
$c \geq \frac{x+1}{\Delta}+\eps_2.$ Furthermore, $\brm{Pr}[\pi(x)>cn] \geq 1-c$. Thus the probability that $v$ wins the 
election is at least
$1-(x+1)/\Delta-\eps_2$.
It follows that
\begin{eqnarray*}
 \brm{E}[ d(\pi) \:|\: \pi|_{[n-1]}=\pi']  &\geq& \left(\frac{x+1}{\Delta}+\eps_2\right)\cdot x+\left(1-\frac{x+1}{\Delta}-\eps_2 \right)\cdot\Delta  \\
 &\geq& \frac{\Delta^2 -(x+1)\Delta+(x+1)x}{\Delta}-\eps_2\Delta \\ 
 &\geq& \left(\frac 3 4 - \eps_2 -\frac{1}{\Delta} \right)\cdot\Delta+ \frac{(x-\Delta/2)^2}{\Delta} \\
 &\geq& \left(\frac 34 - \frac{\eps}{2} \right)\cdot\Delta
\end{eqnarray*}

\vspace{-1cm}\qed \vskip 15pt

The $\frac34$ bound provided in Theorem~\ref{thm:permutation} is tight for any $\Delta$. 
\begin{theorem}\label{thm:tight}
For every $0<\eps<1/4$ and every $N>0$, there exists a directed graph $G$ such that $\Delta^-(G) \geq N$ and the 
expected degree
of the winner selected by the permutation mechanism is at most $(\frac 3 4 + \eps)\Delta^-(G)$.
\end{theorem}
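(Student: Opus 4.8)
\medskip
\noindent\emph{Proof proposal for Theorem~\ref{thm:tight}.}

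The plan is to construct, for given $N$ and $\eps$, a graph $G$ in which a single vertex $v$ has in-degree $\Delta=2m$ for a large constant $m$ (fixed last), every other vertex has in-degree at most $m$, and the permutation mechanism elects $v$ with probability essentially $\tfrac12$. Since a run in which $v$ loses must then output a winner of in-degree at most $m=\Delta/2$, this gives $\brm{E}[d(\pi)]\le \tfrac12\cdot\Delta+\tfrac12\cdot\tfrac\Delta2+(\text{error})\le(\tfrac34+\eps)\Delta$. Concretely, let $G$ consist of a vertex $v$ together with $\Delta=2m$ in-neighbours $A_1,\dots,A_\Delta$, each of in-degree $0$ and with a single out-arc to $v$; and a \emph{core} $H_0$ formed by $R$ vertex-disjoint in-stars $Q_1,\dots,Q_R$, where $Q_i$ has a centre $q_i^{*}$ of in-degree $m$ whose $m$ in-neighbours have in-degree $0$. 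Then $v$ is the unique vertex of maximum in-degree $\Delta=2m\ge N$, and $n=1+2m+R(m+1)$; we take $R$ (hence $n$) astronomically large compared with $m$.

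The analysis views a run of the mechanism on $G$ as a run on $H:=G\setminus v$ into which $v$ is inserted at a uniformly random position. Let $\pi$ be a uniformly random permutation of $V(G)$ and $\pi'=\pi|_{V(H)}$, and condition on two high-probability events. \textbf{(a) Balancedness.} We ask that $\pi'$ be $(\Delta,\eps_2)$-balanced with respect to the marked set $\{A_1,\dots,A_\Delta\}$, \emph{and} satisfy the matching upper bound (among the first $\ell$ vertices of $\pi'$ at most $(\ell/|V(H)|+\eps_2)\Delta$ are in-neighbours of $v$); both hold with probability $\ge 1-O(\eps_2)$ by the two-sided estimate of Lemma~\ref{lem:prob} together with the union-bound argument of Lemma~\ref{lem:balanced}. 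This converts ``a $c$-fraction of $V(H)$ precedes $v$'' into ``$v$ has $(2c\pm 2\eps_2)m$ in-neighbours preceding it.'' \textbf{(b) Early crystallisation.} We ask that some star $Q_i$ have all $m+1$ of its vertices among the first $\eps_2|V(H_0)|$ core vertices of $\pi'$, with $q_i^{*}$ last among them. When such a centre is examined, all $m$ of its votes are present, so it becomes (or ties for) provisional leader with counted in-degree $m$; since no vertex other than $v$ has in-degree above $m$ and the leader's counted in-degree is non-decreasing, from then on the provisional leader has counted in-degree exactly $m$ and true in-degree exactly $m$. A fixed $Q_i$ crystallises early with probability at least a positive constant $\rho=\rho(m,\eps_2)$; the events for distinct $i$ concern disjoint vertex sets and are essentially negatively correlated, so a second-moment estimate yields probability $\ge 1-\eps_2$ that some star crystallises early, provided $R$ is large enough.

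Conditioning on (a) and (b) (which together fail with probability $O(\eps_2)$, costing at most $O(\eps_2)\Delta$ in expectation), we split on the position of $v$. If a $c$-fraction of $V(H)$ precedes $v$ with $2\eps_2<c<\tfrac12-2\eps_2$, then by the time $v$ is examined the core has crystallised, so the leader has counted in-degree $m$, whereas $v$ contributes at most $(2c+2\eps_2)m<m$ counted votes; hence $v$ does not become provisional leader and the eventual winner has in-degree at most $m=\Delta/2$. If $c>\tfrac12+2\eps_2$, then $v$ arrives with at least $(2c-2\eps_2)m-1>m$ counted votes (for $m$ large), overtakes the leader, and — every later challenger having in-degree at most $m$ while $v$'s counted in-degree only grows — is never dethroned, so $v$ wins and contributes $\Delta$. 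The leftover positions (the initial segment $c\le 2\eps_2$, where $v$ may win, and the window $|c-\tfrac12|\le 2\eps_2$) have total probability $O(\eps_2)$ and contribute at most $\Delta$ each. Adding up, $\brm{E}[d(\pi)]\le\bigl(\tfrac12+O(\eps_2)\bigr)\Delta+\tfrac12\cdot\tfrac\Delta2\le\bigl(\tfrac34+\eps\bigr)\Delta$ once $\eps_2$ is small in terms of $\eps$.

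The crux, and the step I expect to be the main obstacle, is the middle range $2\eps_2<c<\tfrac12-2\eps_2$: we must force $v$ to lose for a $\tfrac12-o(1)$ fraction of its insertion positions, which requires that on almost every permutation the provisional leader's counted in-degree has already climbed to $\Delta/2$ by the time $v$ appears (and it cannot exceed $\Delta/2$, since nothing but $v$ has larger in-degree). The disjoint-stars core and the early-crystallisation estimate of (b) are precisely what delivers this; balancedness (a) is then just bookkeeping that converts the random position of $v$ into a count of its visible in-neighbours. To finish, fix the constants in order: choose $\eps_2=\eps_2(\eps)$ small enough that the $O(\eps_2)$ losses above total at most $\eps$; choose $m\ge\max\bigl(\lceil N/2\rceil,\ N_2(\eps_2),\ \lceil\eps_2^{-1}\rceil,\ 2\bigr)$ so that $\Delta=2m\ge N$, Lemma~\ref{lem:balanced} applies, and the elementary rounding steps go through; and finally choose $R$, hence $n$, large enough that events (a) and (b) hold with the stated probabilities and all lower-order terms are negligible. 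This produces a graph $G$ with $\Delta^-(G)\ge N$ on which the expected in-degree of the winner is at most $(\tfrac34+\eps)\Delta^-(G)$, as required.
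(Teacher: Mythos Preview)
Your proposal is correct and follows the same overall strategy as the paper: build a graph with one distinguished vertex of in-degree $\approx 2m$ plus a large ``background'' in which every vertex has in-degree at most $m$, arranged so that with high probability the provisional leader's counted in-degree has already reached $m$ by the time the distinguished vertex is examined with fewer than $m$ in-neighbours to its left.

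The execution, however, is genuinely different. The paper takes $\Delta=2N-1$ \emph{odd}, so that by pure symmetry exactly half of all permutations leave $v_0$ with at most $N-1$ preceding in-neighbours --- no balancedness lemma is needed at all. For the background the paper uses an $N$-regular digraph and greedily extracts a set $Z$ of vertices with pairwise disjoint closed in-neighbourhoods; the events ``all in-neighbours of $v$ precede $v$'' for $v\in Z$ are then literally independent (each depends only on the relative order of a disjoint $(N{+}1)$-tuple), so $\Pr[\text{none occurs}]\le(1-\tfrac{1}{N+1})^{|Z|}$ directly. You instead take $\Delta=2m$ even and disjoint in-stars, then invoke the two-sided version of Lemma~\ref{lem:balanced} to convert the insertion position of $v$ into a count of its visible in-neighbours, and a second-moment (or negative-correlation) estimate for early crystallisation. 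Your route is heavier but also a bit more careful: by explicitly requiring crystallisation to occur within the first $\eps_2$ fraction of the core you avoid any question about whether the crystallised centre precedes $v$ in $\pi$, a point the paper's write-up skates over. Incidentally, for your disjoint stars the events ``$q_i^*$ is last among the $m{+}1$ vertices of $Q_i$'' are already independent, so if you were willing to drop the ``early'' requirement and argue as the paper does, the second-moment step could be replaced by the same one-line product bound. Both approaches work; the paper's parity trick and independence construction make for a two-paragraph proof, while yours reuses the upper-bound machinery.
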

\begin{proof}
Without loss of generality we assume that $N \geq 1/\eps$.
Let $G'$ be a directed graph such that  $n:=|V(G')| \geq (N+1)(N^2+N+1)\cdot \log \frac{1}{\eps}$, and $d^-(v)=d^{+}(v)=N$ for every $v \in V(G')$. 
Let $G$ be obtained from $G'$ by adding a new vertex $v_0$ and $2N-1$ directed edges from arbitrary vertices in $V(G')$ to $v_0$.
Thus $\Delta:=\Delta-(G)=d^-(v_0)=2N-1$. 

Now, in $G'$ one can greedily construct a set $Z$ of at least $n/(N^2+N+1)$ vertices,
such that no two vertices of $Z$ have common in-neighbors and no two vertices of $Z$ are joined by an edge.
After a vertex $z\in Z$ is chosen, simply remove $z$, the in-neighbors and out-neighbors of $z$, and the
out-neighbors of $z$'s in-neighbors (the inequality is satisfied since the in-neighbors have a common out-neighbor). Then recurse.
Let $\pi$ be a permutation of $V(G)$ chosen uniformly at random. Let $X_v$ denote the event that a vertex $v \in V(G'
)$ is preceded by all of its neighbors in $\pi$. Clearly $\brm{Pr}[X_v]=\frac{1}{N+1}$ for every $v \in V(G')$, and moreover,
by construction of $Z$, the events $\{X_v\}_{v \in Z}$ are mutually independent. Hence 
\begin{eqnarray*}
\brm{Pr}[\cup_{v \in V(G')}X_v] &\geq& 1 - \left(1 - \frac{1}{N+1} \right)^{\frac{n}{N^2+N}} \\ 
&\geq& 1-\left(1 - \frac{1}{N+1} \right)^{(N+1)\cdot \log \frac{1}{\eps}} \\
&\geq& 1 -\eps.
\end{eqnarray*}
Note that if the event $\cup_{v \in V(G')}X_v$ occurs then one of the vertices of $G'$ receives $N$ 
votes in the permutation mechanism. By symmetry the probability that $v_0$ is preceded by at 
most $N-1$ of its neighbors in $\pi$ is equal to $1/2$.
Thus $v_0$ is not selected as a winner with probability at least $1/2-\eps$. We deduce that the 
expected in-degree of the winner is at most
\begin{eqnarray*}
\left(\frac 12 -\eps\right)\cdot \frac{\Delta+1}{2} + \left(\frac 12 +\eps \right)\cdot \Delta
&= &\left( \frac 3 4 + \eps \right)\cdot \Delta - \eps N + \frac{1}{4} \\
&\leq& \left( \frac 3 4 + \eps \right)\cdot \Delta .\hspace{60pt}\qed
\end{eqnarray*} 
\end{proof}

\section{The Slicing Mechanism}

In this section, we present the {\em slicing mechanism} and prove that it
outputs a vertex whose expected in-degree is near optimal. 

\restatethm{\ref{thm:slice}}{For any $\epsilon>0$, there is a constant $M_{\epsilon}$ such that if $\Delta^- \ge M_{\epsilon}$ then
the slicing mechanism is $(1-\epsilon)$-optimal.}

The constant $M_{\epsilon}$ is independent of the number of vertices and is a 
a polynomial function of $\frac{1}{\epsilon}$. We remark that we have made no attempt to
optimize this constant. 

The {\em slicing mechanism} is formalized in Procedure \ref{alg:slice-mech}. 

\begin{algorithm}[t]
\caption{The Slicing Mechanism}
\label{alg:slice-mech}
\begin{algorithmic}[PERF]
\STATE \ 
\STATE {\bf SAMPLING PHASE} 
\STATE  {\scriptsize [Sample]}\ \  Draw a random sample $\mathcal{X}$, where each vertex is sampled with probability $\eps$. 
\FOR{all $v\in V\setminus  \mathcal{X}$}
\STATE {\scriptsize [Estimated-Degree.]}\ \  $d_e(v) \gets \frac{1}{\epsilon}\cdot d_ {\mathcal{X}}^-(v)$ 
\ENDFOR

\STATE \ 
\STATE {\bf SLICING PHASE} 
\STATE  {\scriptsize [Slices]}\ \ Create $\tau= \lceil \frac{1}{\epsilon^2}\rceil$ sets $\{S_1, \dots, S_{\tau}\}$ initialized to empty sets. 
\STATE $\Delta_e \gets \max_{v \in V\setminus  \mathcal{X}}(d_e(v))$
\FOR{all $v\in V\setminus  \mathcal{X}$}
\FOR{$i=1$ to $\tau$}
\IF{$(i-1) \epsilon^2 \cdot \Delta_e \leq d_e(v) \leq i \epsilon^2 \cdot \Delta_e$} 
\STATE $S_i \gets S_i\cup \{v\}$
\ENDIF
\ENDFOR
\ENDFOR

\STATE \ 
\STATE {\bf ELECTION PHASE} 
\STATE {\scriptsize [Revealed Set]}\ \  $\mathcal{R}\gets \mathcal{X}$ 
\STATE {\scriptsize [Provisional Winner]}\ \  $y_0\gets \textrm{argmax}_{u \in V \setminus \mathcal{R}}(d^-_\mathcal{R}(u))$ 
\hfill {\scriptsize [Break ties arbitrarily.]} 
\FOR{$i=1$ to $\tau$}
\FOR{all $v\in S_i\setminus \{y_{i-1}\}$}
\STATE $\mathcal{R}\gets \mathcal{R}\cup \{v\}$ with probability $(1-\epsilon)$.
\ENDFOR
\STATE $y_i'\gets \textrm{argmax}_{u \in V \setminus \mathcal{R}}(d^-_\mathcal{R}(u))$ \hfill {\scriptsize [Break ties arbitrarily.]} 
\STATE $\mathcal{R}\gets (\mathcal{R}\cup S_i\cup \{y_{i-1}\}) \setminus  \{y_{i}'\}$
\STATE  {\scriptsize [Provisional Winner]}\ \  $y_i\gets \textrm{argmax}_{u \in V \setminus \mathcal{R}}(d^-_\mathcal{R}(u))$ 
\hfill {\scriptsize [Break ties arbitrarily.]} 
\STATE $\mathcal{R}\gets (\mathcal{R}\cup \{y_{i}'\}) \setminus  \{y_{i}\}$ 
\ENDFOR
\STATE The elected vertex is $y_{\tau}$.
\end{algorithmic}
\end{algorithm}

This mechanism consists of three parts which  we now informally discuss.
In the first part, the {\em sampling phase}, we independently at random collect a sample $\mathcal{X}$ of  
the vertices. We use arcs incident to $\mathcal{X}$ to estimate the in-degree of
every other vertex in the graph.
In the second part, the {\em slicing phase}, we partition the unsampled vertices into slices, 
where each slice consists of vertices with roughly the same {\em estimated-degree}. 
The third part, the {\em election phase}, selects the winning vertex.
It does this by considering each slice in increasing 
order (of estimated-degrees). After the $i$-th slice is examined the
mechanism selects as {\em provisional leader}, $y_i$, the vertex that has the 
largest number of in-neighbors amongst the set of vertices $\mathcal{R}$ 
that have currently been eliminated. 
The winning vertex is the provisional leader after the final slice has been examined.

We emphasize, again, that the impartiality of the mechanisms
follows from the fact that the votes of a vertex are only revealed when it has been
eliminated, that is, added to $\mathcal{R}$. 
Observe that at any stage we have one provisional leader; if this leader changes when we
examine a slice then the votes of the previous leader are revealed if its slice has already been 
examined.

\subsection{Analysis of the Sampling Phase}

Observe that the sampling phase is used to estimate the in-degree of
each unsampled vertex $v$. Since each in-neighbor of $v$ is sampled in $\mathcal{X}$ with probability $\epsilon$, 
we anticipate that an $\eps$-fraction of the in-neighbors of $v$ are sampled.
Thus, we have an \emph{estimated in-degree}
$d_e(v):=\frac{1}{\epsilon}\cdot d_{\mathcal{X}}^-(v)$, for each vertex $v\in V\setminus \mathcal{X}$. 
It will be important to know how often these estimates are (roughly) accurate. 
In particular, we say that a vertex $u$ is \emph{$\hat{\eps}$-well-estimated} if $|d_e(u)-d(u)| \leq \hat{\eps} d(u)$. 

We will be interested in the case where $\hat{\eps} \ll \eps$. (In particular, we will later select $\hat{\eps} = \frac{\eps^2}{4}$.)
Before analyzing the probability  that a vertex is $\hat{\eps}$-well-estimated, recall the classical Chernoff bound.
\begin{theorem}\label{thm:chernoff}\emph{[Chernoff bound]}\\
 Let $(X_i)_{i \leq n}$ be $n$ independent Bernouilli variables each having probability $p$. Then
 $$\brm{Pr}[|\sum X_i-pn| \geq \delta pn] \leq e^{-\frac{\delta^2 pn}{3}}.\ \ \ \ 
 $$ 
 \end{theorem}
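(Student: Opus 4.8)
\emph{Proof plan.} The plan is to give the standard proof of the multiplicative Chernoff bound by the exponential moment method (Chernoff's trick); this establishes the inequality for $0<\delta\le 1$, which is the only regime used in the sequel (all later applications take $\delta=\hat{\eps}<1$). Write $S:=\sum_{i\le n}X_i$, so that $\brm{E}[S]=pn$. The one computation we need at the outset is that, by independence of the $X_i$ and the inequality $1+x\le e^{x}$, $\brm{E}\bigl[e^{tS}\bigr]=\prod_{i\le n}\brm{E}\bigl[e^{tX_i}\bigr]=\bigl(1+p(e^{t}-1)\bigr)^{n}\le e^{np(e^{t}-1)}$ for every real $t$. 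Both tails are then controlled by applying Markov's inequality to $e^{tS}$ and optimizing over the free parameter $t$.

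For the upper tail, fix $t>0$. Markov's inequality gives $\brm{Pr}[S\ge(1+\delta)pn]=\brm{Pr}\bigl[e^{tS}\ge e^{t(1+\delta)pn}\bigr]\le e^{-t(1+\delta)pn}\,\brm{E}[e^{tS}]\le e^{-t(1+\delta)pn+np(e^{t}-1)}$, and the minimizing choice $t=\ln(1+\delta)$ yields the familiar estimate $\brm{Pr}[S\ge(1+\delta)pn]\le\bigl(e^{\delta}(1+\delta)^{-(1+\delta)}\bigr)^{np}$. I would then reduce matters to the one-variable inequality $\delta-(1+\delta)\ln(1+\delta)\le-\tfrac{\delta^2}{3}$ for $0\le\delta\le1$, which shows the bracketed quantity is at most $e^{-\delta^2/3}$, hence $\brm{Pr}[S\ge(1+\delta)pn]\le e^{-np\delta^2/3}$. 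This inequality is a short calculus check: $h(\delta):=\delta-(1+\delta)\ln(1+\delta)+\tfrac{\delta^2}{3}$ has $h(0)=h'(0)=0$; on $[0,\tfrac12]$ one has $h''(\delta)=-\tfrac{1}{1+\delta}+\tfrac23\le0$, so $h$ is concave there and thus nonpositive; on $[\tfrac12,1]$ one has $h''\ge0$ but $h'(\tfrac12)<0$ and $h'(1)<0$, so $h'<0$ throughout, $h$ is decreasing, and $h$ remains nonpositive.

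For the lower tail, fix $t>0$; then $\brm{Pr}[S\le(1-\delta)pn]=\brm{Pr}\bigl[e^{-tS}\ge e^{-t(1-\delta)pn}\bigr]\le e^{t(1-\delta)pn+np(e^{-t}-1)}$, and the choice $t=-\ln(1-\delta)$ gives $\brm{Pr}[S\le(1-\delta)pn]\le\bigl(e^{-\delta}(1-\delta)^{-(1-\delta)}\bigr)^{np}\le e^{-np\delta^2/2}\le e^{-np\delta^2/3}$; here the penultimate inequality is $-\delta-(1-\delta)\ln(1-\delta)\le-\tfrac{\delta^2}{2}$, which holds because $\psi(\delta):=-\delta-(1-\delta)\ln(1-\delta)+\tfrac{\delta^2}{2}$ satisfies $\psi(0)=\psi'(0)=0$ and $\psi''(\delta)=-\tfrac{\delta}{1-\delta}\le0$ on $[0,1)$, so $\psi$ is concave and nonpositive. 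Combining the two one-sided bounds by a union bound gives $\brm{Pr}[|S-pn|\ge\delta pn]\le e^{-np\delta^2/3}+e^{-np\delta^2/2}$, which is the stated bound up to an immaterial leading constant (each one-sided tail already satisfies the displayed inequality, and only the qualitative exponential decay is invoked downstream).

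Since this is the textbook Chernoff computation, I do not anticipate any real obstacle; the only genuine content is the pair of elementary inequalities comparing $(1\pm\delta)\ln(1\pm\delta)$ with a quadratic in $\delta$, and these are dispatched by the calculus arguments sketched above.
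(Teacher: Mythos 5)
Your proposal is the standard exponential-moment (Chernoff/Markov) argument, and it is essentially correct; note, though, that the paper itself offers no proof of this statement at all --- it is quoted as a classical bound --- so there is no internal proof to compare against. Two remarks. First, your derivation establishes each one-sided tail with exponent $-\delta^2 pn/3$ (upper tail, for $0<\delta\le 1$) and $-\delta^2 pn/2$ (lower tail), and the union bound then gives $\brm{Pr}[|\sum X_i-pn|\ge\delta pn]\le e^{-\delta^2 pn/3}+e^{-\delta^2 pn/2}\le 2e^{-\delta^2 pn/3}$, which is weaker than the displayed statement by a constant factor; you are right that this is immaterial, and in fact the paper's statement as written (two-sided, no factor $2$, no restriction on $\delta$) is not literally correct --- e.g.\ $n=1$, $p=\tfrac12$, $\delta=1$ gives left-hand side $1$ and right-hand side $e^{-1/6}<1$ --- so the honest reading is exactly the version you prove. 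Second, your restriction to $0<\delta\le 1$ is harmless for the paper: every invocation (Corollary~\ref{coro:chernoff} with $\delta=\hat{\eps}/\eps=\eps/4$, Lemma~\ref{lemma:bigdegrees} with $\delta$ of order $\hat{\eps}$, and Lemma~\ref{lem:prob} with $\delta=\eps_1$) uses $\delta<1$, and only the exponential decay matters, so constants are absorbed. The calculus verifications of $\delta-(1+\delta)\ln(1+\delta)\le-\delta^2/3$ on $[0,1]$ and $-\delta-(1-\delta)\ln(1-\delta)\le-\delta^2/2$ check out as you sketch them.
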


\begin{corollary}\label{coro:chernoff}
 For any vertex $v$ of in-degree at least $\Delta_0=\max(3000,\frac{9\epsilon^2}{\hat{\eps}^4})$, the 
 probability that $v$ is not $\hat{\eps}$-well-estimated is at most $\frac{1}{d(v)^6}$.
\end{corollary}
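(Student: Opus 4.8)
The plan is to derive Corollary~\ref{coro:chernoff} directly from the Chernoff bound of Theorem~\ref{thm:chernoff}. Fix a vertex $v$ with $d(v) \geq \Delta_0$. Each of the $d(v)$ in-neighbors of $v$ is placed in the sample $\mathcal{X}$ independently with probability $\eps$, so $d^-_{\mathcal{X}}(v)$ is a sum of $d(v)$ independent Bernoulli$(\eps)$ variables, with mean $\eps \cdot d(v)$. Since $d_e(v) = \frac{1}{\eps} d^-_{\mathcal{X}}(v)$, the event that $v$ fails to be $\hat{\eps}$-well-estimated, namely $|d_e(v) - d(v)| > \hat{\eps} d(v)$, is exactly the event $|d^-_{\mathcal{X}}(v) - \eps d(v)| > \hat{\eps} \cdot \eps d(v)$. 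Applying Theorem~\ref{thm:chernoff} with $n = d(v)$, $p = \eps$, and $\delta = \hat{\eps}$ gives that this probability is at most $e^{-\hat{\eps}^2 \eps d(v)/3}$.

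It then remains to check that $e^{-\hat{\eps}^2 \eps d(v)/3} \leq d(v)^{-6}$, i.e. that $\hat{\eps}^2 \eps \, d(v) \geq 18 \ln d(v)$. The hypothesis $d(v) \geq \Delta_0 = \max(3000, 9\eps^2/\hat{\eps}^4)$ is exactly what is needed here, so the main (and essentially only) work is this elementary inequality. I would argue it in two regimes. First, the bound $d(v) \geq 9\eps^2/\hat{\eps}^4 \geq 9\eps/\hat{\eps}^4$ (as $\eps \le 1$) rearranges to $\hat{\eps}^2 \eps \, d(v) \geq 9 (\eps d(v))^{1/2} \cdot (\hat{\eps}^2 \eps d(v))^{1/2}/(\dots)$ — more cleanly: one shows $\hat{\eps}^2 \eps \, d(v) \geq 6\sqrt{d(v)}$, using $d(v) \geq 9/\hat{\eps}^4 \eps^2$ isn't quite the stated form, so instead combine $d(v)\ge 9\eps^2/\hat\eps^4$ with $\eps\le 1$ to get $\hat\eps^2\eps\sqrt{d(v)} \ge \hat\eps^2\eps \cdot 3\eps/\hat\eps^2 = 3\eps^2 \ge$ a constant times $\hat\eps$-free quantity; and second, the bound $d(v) \geq 3000$ gives $\sqrt{d(v)} \geq 18 \ln d(v) / \sqrt{d(v)} \cdot (\text{something} \geq 1)$ since $\sqrt{t} \geq 3\ln t$ for $t \geq 3000$. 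Putting these together yields $\hat{\eps}^2 \eps \, d(v) \geq 6\sqrt{d(v)} \geq 18 \ln d(v)$, as required.

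I expect the only delicate point to be bookkeeping the two-regime inequality cleanly; it is entirely routine calculus (the function $\sqrt{t}/\ln t$ is increasing and exceeds $3$ well before $t = 3000$, and $\hat{\eps}^2 \eps \sqrt{d(v)}$ is bounded below by an absolute constant once $d(v) \geq 9\eps^2/\hat{\eps}^4$), so no genuine obstacle arises. One should also note $\hat{\eps} \leq \eps \leq 1$ so that all the manipulations with these parameters are valid, and that the corollary is only invoked with $\hat{\eps} = \eps^2/4$, for which $\Delta_0$ is indeed polynomial in $1/\eps$.
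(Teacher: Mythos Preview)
Your setup and Chernoff application are correct: with $\delta=\hat\eps$, $p=\eps$, $n=d(v)$ you get $\Pr[\,|d_e(v)-d(v)|>\hat\eps\,d(v)\,]\le e^{-\hat\eps^2\eps\,d(v)/3}$. (The paper instead applies Theorem~\ref{thm:chernoff} with $\delta=\hat\eps/\eps$, obtaining the sharper rate $e^{-\hat\eps^2 d(v)/(3\eps)}$; but that rate is a bound on the stricter deviation $|d_e(v)-d(v)|\ge(\hat\eps/\eps)\,d(v)$, so the paper's displayed chain has its own slip in the first line.)

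The genuine gap in your argument is the numerical step. With your exponent you need $\hat\eps^2\eps\,d(v)\ge 18\ln d(v)$, and you try to reach it via $\hat\eps^2\eps\,d(v)\ge 6\sqrt{d(v)}$, i.e.\ $d(v)\ge 36/(\hat\eps^4\eps^2)$. The stated threshold $\Delta_0=9\eps^2/\hat\eps^4$ is smaller than this by a factor $4/\eps^4$, which is exactly why your middle paragraph never closes --- and it cannot, because for general $\hat\eps\le\eps$ the inequality is false. (Take $\hat\eps=\eps$, so $\Delta_0=9/\eps^2$; at $d(v)=\Delta_0$ one has $\hat\eps^2\eps\,d(v)=9\eps\to 0$ while $18\ln d(v)\to\infty$.) The fix is the one you already gesture at: substitute $\hat\eps=\eps^2/4$ from the outset. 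Then $\Delta_0=2304/\eps^6$ and $\hat\eps^2\eps\,\Delta_0=144/\eps$, which comfortably dominates $18\ln\Delta_0=O(\ln(1/\eps))$ for all $\eps\le 1/8$, and since $t\mapsto(\ln t)/t$ is decreasing the bound extends to all $d(v)\ge\Delta_0$. Alternatively, keep $\hat\eps$ general and restate the corollary with $\Delta_0=\max\bigl(3000,\,36/(\hat\eps^4\eps^2)\bigr)$, which is what your Chernoff bound actually supports and which costs nothing downstream.
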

\begin{proof}
 The proof is an application of Theorem~\ref{thm:chernoff}. For every in-neighbor $u_i$ of $v$, the vertex $u_i$ is 
 sampled with 
 probability $\epsilon$. Denote by $X_i$ the Bernouilli variable corresponding to ``$u_i$ is in $\mathcal{X}$'' which has 
 value $1$ if $u_i \in \mathcal{X}$ 
 and $0$ otherwise. The variables $X_i$ are obviously independent and identically distributed. 
 Note that $\sum X_i = \epsilon\cdot d_e(u)$
  and its expectation is $\epsilon \cdot d(u)$. 
 \begin{eqnarray*} 
 \brm{Pr}\Big(|d_e(u) - d(u)| \geq \frac{\hat{\eps}}{\epsilon} \cdot d(u)\Big) &=& \brm{Pr}\Big(|\epsilon\cdot d_e(u) 
 - \epsilon \cdot d(u)| \geq \hat{\eps} \cdot d(u)\Big) \\
 &\leq& e^{-\frac{\hat{\eps}^2 d(u)}{3\eps}} \\
 &\leq& e^{-\frac{\hat{\eps}^2 \sqrt{\Delta_0}}{3\eps} \cdot \sqrt{d(u)}}  \\
 &\leq & e^{-\sqrt{d(u)}} \\
 &\leq& \frac{1}{d(u)^6}
 \end{eqnarray*}
Here the first inequality is an application of Theorem~\ref{thm:chernoff} with $\delta=\frac{\hat{\eps}}{\eps}$.
The second inequality holds because $d(u)\ge \Delta_0$.
The third inequality follows as $\Delta_0=\frac{9\epsilon^2}{\hat{\eps}^4}$.
Finally, the fourth inequality holds since $\sqrt{d(u)} \geq 6 \log(d(u))$ when $d(u) \geq \Delta_0 \geq 3000$.
\qed
\end{proof}

Let  $\hat{\eps} := \frac{\eps^2}{4}$.
We will be interested in the probability that every vertex of high degree in a local region is $\hat{\eps}$-well-estimated.
Specifically, let $x$ be a vertex of maximum in-degree $\Delta$. Denote by $N^{-k}(x)$ the set of vertices which 
can reach $x$ with an 
oriented path of length at most $k$. For instance, $N^{-1}(x)$ is the in-neighborhood of $x$ plus $x$. 
Applying the union 
bound with Corollary~\ref{coro:chernoff}, we obtain:
\begin{corollary}\label{coro:inneighbors}
Let $x$ be a vertex of in-degree $\Delta$. If $\Delta \geq \Delta_1=\max(\frac{\Delta_0}{\epsilon^2},\frac{3}{\epsilon^{5}})$ 
then, with probability $(1-\epsilon)$, any vertex of $N^{-3}(x)$ of in-degree at least $\epsilon^2\cdot  \Delta$ is $\hat{\eps}$-well-estimated.
\end{corollary}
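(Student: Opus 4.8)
\textbf{Proof proposal for Corollary~\ref{coro:inneighbors}.}
The plan is to bound the number of vertices in $N^{-3}(x)$ that have in-degree at least $\eps^2 \cdot \Delta$, then apply a union bound over these vertices using Corollary~\ref{coro:chernoff}. First I would observe that since $x$ has maximum in-degree $\Delta$, every vertex has in-degree at most $\Delta$, so $|N^{-1}(x)| \leq \Delta + 1$, $|N^{-2}(x)| \leq (\Delta+1)^2$, and in general $|N^{-3}(x)| \leq (\Delta+1)^3 \leq 8\Delta^3$ (using $\Delta \geq 1$). Hence there are at most $8\Delta^3$ candidate vertices to worry about, and in particular at most $8\Delta^3$ vertices of in-degree at least $\eps^2 \Delta$ in $N^{-3}(x)$.

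Next, since $\Delta \geq \Delta_1 = \max(\Delta_0/\eps^2, 3/\eps^5)$, every such vertex $u$ has $d(u) \geq \eps^2 \Delta \geq \Delta_0$, so Corollary~\ref{coro:chernoff} applies and gives $\brm{Pr}[u \text{ not } \hat{\eps}\text{-well-estimated}] \leq 1/d(u)^6 \leq 1/(\eps^2 \Delta)^6 = 1/(\eps^{12}\Delta^6)$. Applying the union bound over the at most $8\Delta^3$ such vertices, the probability that some vertex of $N^{-3}(x)$ of in-degree at least $\eps^2\Delta$ fails to be $\hat{\eps}$-well-estimated is at most
$$8\Delta^3 \cdot \frac{1}{\eps^{12}\Delta^6} = \frac{8}{\eps^{12}\Delta^3}.$$
It remains to check that this is at most $\eps$, i.e. that $\Delta^3 \geq 8/\eps^{13}$, which holds for $\Delta$ large enough as a function of $\eps$; if the stated bound $\Delta_1 = \max(\Delta_0/\eps^2, 3/\eps^5)$ is not quite sufficient for this crude estimate, I would simply enlarge $\Delta_1$ by the appropriate polynomial factor in $1/\eps$ (the paper has already disclaimed any attempt to optimize constants), or sharpen the counting bound on $|N^{-3}(x)|$ and the per-vertex failure probability.

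The main obstacle is purely bookkeeping: making the polynomial-in-$\Delta$ count of relevant vertices dominated by the polynomially-small per-vertex failure probability, and confirming the threshold $\Delta_1$ stated in the corollary is large enough (or adjusting it). The reason Corollary~\ref{coro:chernoff} was stated with the strong $1/d(v)^6$ bound rather than something weaker is precisely to survive this union bound over a cubic-sized neighborhood, so the exponent $6$ should leave ample room. No genuinely new idea is needed beyond the union bound already signposted in the text.
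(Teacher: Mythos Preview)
Your proposal is correct and follows essentially the same route as the paper: bound $|N^{-3}(x)|$ by a cubic in $\Delta$, apply Corollary~\ref{coro:chernoff} to each vertex of in-degree at least $\eps^2\Delta$, and take a union bound. The paper uses the slightly sharper count $|N^{-3}(x)|\le 1+\Delta+\Delta^2+\Delta^3\le 3\Delta^3$, but your $8\Delta^3$ works too; in fact the stated threshold already suffices for your constant, since $\Delta\ge 3/\eps^5$ gives $\Delta^3\ge 27/\eps^{15}\ge 8/\eps^{13}$ for $\eps<1$, so no enlargement of $\Delta_1$ is needed.
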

\begin{proof}
Since $\epsilon^2\cdot  \Delta \geq \Delta_0$, Corollary~\ref{coro:chernoff} ensures that a
vertex of in-degree at least $\epsilon^2\cdot  \Delta$ is not $\hat{\eps}$-well-estimated 
 with probability at most $\frac{1}{(\epsilon^2 \Delta)^6}$. There are at most $1+\Delta+ \Delta^2+\Delta^3 \leq 3\cdot \Delta^3$
 vertices in $N^{-3}(x)$ since $\Delta \geq 2$. 
 The union bound implies that every vertex in $N^{-3}(x)$ with in-degree at least $\epsilon^2 \Delta$ is $\hat{\eps}$-well-estimated 
 with probability at least $(1-\frac{3\Delta^3}{\epsilon^{12} \Delta^6})$. As $\Delta  \geq \frac{3}{\epsilon^{5}}$, 
 the conclusion holds.~\qed 
\end{proof}

It the rest of this section we will make a set of assumptions.
Given these assumptions, we will prove that the mechanism outputs a vertex of
high expected in-degree. We will say that the mechanism ``fails" if these assumptions do not hold. 
We will then show that the probability that the mechanism fails is very small.
The two assumptions we make are:\\
\vskip 3pt
(A1) Vertex $x$ is not sampled. This assumption fails
with probability~$\eps$. \\ 
\vskip 3pt
(A2) Every
vertex in $N^{-2}(x)$ with in-degree at least  $\epsilon^2 \cdot \Delta$ is regionally well-estimated.
Here, we say a vertex is \emph{regionally well-estimated} if its degree is $\hat{\eps}$-well-estimated and 
all its in-neighbors of in-degree at least $\epsilon^2 \Delta$ 
are also $\hat{\eps}$-well-estimated. Corollary~\ref{coro:inneighbors} ensures that all the vertices of $N^{-2}(x)$ of degree at least 
 $\epsilon^2 \cdot \Delta$ are  regionally well-estimated with probability $(1-\eps)$. Thus, this assumption also fails
 with probability at most $\eps$.

\subsection{Analysis of the Slicing Phase}

Now we consider the slicing phase.
In this phase we partition the unsampled vertices into groups (slices)
according to their estimated degrees.
The {\em width} of a slice is the difference between the upper and lower
estimated-degree requirements for vertices in that group.
We will need the following bounds on the width of a slice.

\begin{lemma}\label{lem:width}
 The width of any slice is at least $(1-\hat{\eps})\epsilon^2 \cdot \Delta$ and at most $\epsilon\cdot \Delta$.
\end{lemma}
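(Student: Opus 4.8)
The plan is to observe that every slice has the \emph{same} width, namely $\epsilon^2\Delta_e$, and then to sandwich $\Delta_e$ between $(1-\hat{\eps})\Delta$ and $\Delta/\epsilon$, the lower bound coming from the well-estimation guarantees of the sampling phase and the upper bound being an easy deterministic estimate.

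First I would unwind the definition: a vertex $v$ is placed in $S_i$ exactly when $(i-1)\epsilon^2\Delta_e \le d_e(v) \le i\epsilon^2\Delta_e$, so the difference between the upper and lower estimated-degree thresholds that define $S_i$ — i.e. the width of $S_i$ — is $i\epsilon^2\Delta_e - (i-1)\epsilon^2\Delta_e = \epsilon^2\Delta_e$, independent of $i$. Hence it suffices to prove $(1-\hat{\eps})\Delta \le \Delta_e \le \Delta/\epsilon$.

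For the upper bound, for every unsampled vertex $v$ we have $d_e(v) = \frac{1}{\epsilon}\,d^-_{\mathcal{X}}(v) \le \frac{1}{\epsilon}\,d^-(v) \le \frac{1}{\epsilon}\Delta$, using $d^-_{\mathcal{X}}(v) \le d^-(v)$ and $\Delta = \Delta^-(G)$. Taking the maximum over $v \in V\setminus\mathcal{X}$ gives $\Delta_e \le \Delta/\epsilon$, so the width is at most $\epsilon^2 \cdot \Delta/\epsilon = \epsilon\cdot\Delta$. For the lower bound I would invoke the standing assumptions: by (A1) the maximum-in-degree vertex $x$ is unsampled, so $d_e(x)$ is among the values whose maximum is $\Delta_e$, giving $\Delta_e \ge d_e(x)$; and $x$ is a vertex of $N^{-2}(x)$ with in-degree $\Delta \ge \epsilon^2\Delta$, so by (A2) (equivalently Corollary~\ref{coro:inneighbors}) it is $\hat{\eps}$-well-estimated, whence $d_e(x) \ge (1-\hat{\eps})\,d(x) = (1-\hat{\eps})\Delta$. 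Combining, $\Delta_e \ge (1-\hat{\eps})\Delta$ and the width is at least $(1-\hat{\eps})\epsilon^2\cdot\Delta$.

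The only subtlety worth flagging — ``main obstacle'' would overstate it — is that the lemma is phrased in terms of the true maximum in-degree $\Delta$ whereas the slicing thresholds are scaled by the observed quantity $\Delta_e$; the bridge between the two is precisely the well-estimation of the top vertex $x$ provided by the sampling phase, which is why the lower side incurs the factor $(1-\hat{\eps})$, while on the upper side the crude bound $d^-_{\mathcal{X}}(v)\le d^-(v)\le\Delta$ already suffices.
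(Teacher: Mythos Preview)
Your argument is correct and mirrors the paper's own proof: both show the common slice width equals $\epsilon^2\Delta_e$, bound $\Delta_e$ below by $(1-\hat{\eps})\Delta$ via the well-estimation of $x$, and bound $\Delta_e$ above by $\Delta/\epsilon$ via $d^-_{\mathcal X}(v)\le d^-(v)\le\Delta$. If anything you are slightly more explicit than the paper in invoking (A1) to ensure $x\notin\mathcal X$ so that $d_e(x)$ actually competes for $\Delta_e$.
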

\begin{proof}
By assumption, the vertex $x$ of maximum degree is $\hat{\eps}$-well-estimated. 
Thus, $\Delta_e \geq (1-\hat{\eps}) \cdot \Delta$. Therefore the width of any slice is 
at least $(1-\hat{\eps})\epsilon^2 \cdot \Delta$. 

On the other-hand, take any vertex $u$. At at most $\Delta$ of $u$'s in-neighbors can be sampled
because it has degree at most $\Delta$. It follows that $d_e(u) \leq \frac{\Delta}{\epsilon}$.
Thus, $\Delta_e \leq \frac{\Delta}{\epsilon}$, and the width of any slice is at most
$\eps\cdot \Delta$.
 \qed
\end{proof}

\subsection{Analysis of the Election Phase}
We are now ready to analyze the election phase.
Initially we reveal every vertex in the sample $\mathcal{X}$. The vertex $y_0$ with largest estimated-degree is
then the provisional winner. We then treat the slices in increasing order of estimated-degree.
When we consider slice $S_i$, we will reveal every vertex in $S_i$ except one
(if it is the provisional winner $y_i$). For technical reasons, we will denote by $S_0$ the set $\mathcal{X}$. 
Observe that the set $\mathcal{R}$ is the set of already revealed (eliminated) vertices.

Now let $S_{\le \ell}=\cup_{j=0}^{\ell} S_j$, and denote by $d_\ell(u)=|\{u\in S_{\le \ell}: vu\in A\}$ 
the number of in-neighbors of $v$ that are in $S_j$, for $j \leq \ell$. 
Then we begin by proving two lemmas. The first, Lemma~\ref{lemma:smalldegrees}, states that if a vertex $u \in S_{\ell}$ has 
a large $d_{\ell-1}(u)$ then the elected vertex has large in-degree. 
The second, Lemma~\ref{lemma:bigdegrees}, guarantees that the elected vertex has a large in-degree (with high probability)
if there are many  regionally well-estimated vertices in $S_\ell$ with large $d_{\ell}(u)$.
These lemmas will be applied to a vertex $x$ of in-degree $\Delta$: either many in-neighbors of $x$
are in slices before $x$ and Lemma~\ref{lemma:smalldegrees} will apply, or many in-neighbors of $x$ are
in its slice and we will apply Lemma~\ref{lemma:bigdegrees} to this set of in-neighbors.

\begin{lemma}\label{lemma:smalldegrees}
Take $u \in S_{\ell+1}$. If $d_{\ell}(u) = d$, the elected vertex has in-degree at least $d-1$.
\end{lemma}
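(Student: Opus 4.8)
The plan is to track the provisional leader through the election phase and show that once a vertex $u \in S_{\ell+1}$ with $d_\ell(u) = d$ appears, some vertex that is eventually revealed carries at least $d-1$ of $u$'s ``left'' votes, and hence the final elected vertex $y_\tau$ inherits an in-degree of at least $d-1$. First I would record the key structural invariant of the mechanism: the quantity $d^-_{\mathcal R}(y_i)$, the revealed in-degree of the provisional winner after slice $i$ is processed, is nondecreasing in $i$. This is because when slice $S_i$ is handled, the set $\mathcal R$ only grows (all of $S_i$ except the current provisional winner gets added, and the swaps at the end only exchange which single vertex is held out), and $y_i$ is always chosen as the argmax of $d^-_{\mathcal R}(\cdot)$ over $V \setminus \mathcal R$; so the provisional winner's revealed degree can never decrease from one slice to the next. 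Thus it suffices to exhibit one moment during the processing of $S_{\ell+1}$ at which the provisional winner has revealed in-degree at least $d-1$.

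Next I would examine what happens at the moment slice $S_{\ell+1}$ is processed. Just before this, $\mathcal R \supseteq S_{\le \ell}$ (every earlier slice, plus the sample $S_0 = \mathcal X$, has been fully revealed except for at most the single held-out provisional leader $y_\ell$). The vertex $u$ lies in $S_{\ell+1}$, so $u \notin \mathcal R$ at this point, and $u$ has $d_\ell(u) = d$ in-neighbors inside $S_{\le \ell}$. At most one of those $d$ in-neighbors can fail to be in $\mathcal R$ — namely if it happens to equal the held-out provisional winner $y_\ell$ — so $d^-_{\mathcal R}(u) \ge d - 1$ at the instant the mechanism computes $y_\ell$ (equivalently, when it computes $y_{(\ell+1)-1}$, the provisional winner going into slice $S_{\ell+1}$'s loop iteration, and also at the intermediate argmax steps $y'_{\ell+1}$ and $y_{\ell+1}$). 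Since each of those provisional-winner selections is an argmax of $d^-_{\mathcal R}(\cdot)$ over vertices not yet in $\mathcal R$, and $u$ is a candidate in that argmax at the relevant step, the chosen provisional winner has revealed in-degree at least $d^-_{\mathcal R}(u) \ge d - 1$.

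I would then close the argument by combining the two observations: at the step where $y_{\ell+1}$ (or $y'_{\ell+1}$) is selected during the processing of $S_{\ell+1}$, the provisional winner already has revealed in-degree $\ge d-1$; by monotonicity of $d^-_{\mathcal R}(y_i)$ the provisional winner after every later slice, in particular $y_\tau$, also has revealed in-degree $\ge d-1$; and since $d^-_{\mathcal R}(y_\tau) \le d^-(y_\tau)$ trivially, the elected vertex has (true) in-degree at least $d-1$, as claimed.

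The main obstacle is bookkeeping around the held-out vertex and the several argmax recomputations inside one loop iteration: the mechanism first adds most of $S_{\ell+1}$ to $\mathcal R$ (each with probability $1-\epsilon$), then picks $y'_{\ell+1}$, then puts all of $S_{\ell+1} \cup \{y_\ell\}$ into $\mathcal R$ minus $y'_{\ell+1}$, then picks $y_{\ell+1}$, then swaps $y'_{\ell+1}$ back in and $y_{\ell+1}$ out. I need to pin down one concrete selection step at which (i) $u$ is still eligible (not in $\mathcal R$), and (ii) all but at most one of $u$'s $S_{\le\ell}$-in-neighbors are in $\mathcal R$. The clean choice is the computation of $y_\ell = y_{(\ell+1)-1}$ at the very start of iteration $\ell+1$: there $\mathcal R$ contains $S_{\le \ell} \setminus \{y_\ell\}$ and $u \notin \mathcal R$, so $u$ witnesses $d^-_{\mathcal R}(y_\ell) \ge d-1$, and monotonicity finishes it. Care is needed only to confirm the randomized additions of $S_{\ell+1}$'s members in earlier iterations never placed $u$ in $\mathcal R$ prematurely — but $u \in S_{\ell+1}$ is only ever touched when its own slice is processed, so this is automatic.
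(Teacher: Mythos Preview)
Your argument is correct and matches the paper's proof exactly: both observe that when $y_\ell$ is selected, $u \in S_{\ell+1}$ is an eligible candidate with at least $d-1$ revealed in-neighbours (all of $S_{\le\ell}$ lies in $\mathcal{R}$ except one held-out vertex), so $d^-_{\mathcal{R}}(y_\ell) \ge d-1$, and monotonicity of the provisional winner's revealed degree carries this through to $y_\tau$. One small bookkeeping slip: at the instant $y_\ell$ is computed the single held-out vertex is $y_\ell'$ (not $y_\ell$, which has not yet been determined), which is precisely where the paper's ``$-1$'' comes from.
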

\begin{proof}
When we select the provisional winner $y_{\ell}$ all vertices of $S_{\le \ell}=\cup_{j=0}^{\ell} S_j$
(but at most one, $y_{\ell}'$, if it is in this set) have been revealed.
Now $u \in S_{\ell+1}$ is an eligible candidate for $y_{\ell}$. Thus, at that time,
$d^-_{\mathcal{R}}(y_{\ell}) \geq d^-_\mathcal{R}(u)\ge d_{\ell}(u)  -1 = d-1$. 
 Since the in-degrees of the provisional winners can only increase, the elected vertex $y_{\tau}$ has is in-degree 
 at least $d-1$. Note that the minus one comes from the fact that $y_\ell'$ can be an in-neighbor of $u$. \qed
\end{proof}

\begin{lemma}\label{lemma:bigdegrees}
Let $\Delta \geq \Delta_1=\frac{12^2}{\hat{\eps}^4}$. 
 If there exists an integer $\ell$ and a set $Z \subseteq S_\ell$ of size at least $\epsilon \Delta$ 
 of regionally well-estimated vertices with $d_{\ell}(z) \geq (1-3\epsilon) \Delta+1$ for every $z \in Z$,
 then with probability at least $(1- \eps)$ we have $d(y_\ell') \geq (1-5\epsilon)\Delta$.
\end{lemma}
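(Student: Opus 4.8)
The plan is to track what happens during the processing of slice $S_\ell$ in the election phase, focusing on the set $Z$. When we begin processing $S_\ell$, every vertex in $S_{\le \ell-1}$ except possibly the current provisional leader has already been revealed, so for each $z\in Z$ the revealed in-degree satisfies $d^-_{\mathcal R}(z)\ge d_{\ell-1}(z)-1 \ge d_\ell(z)-|\{$in-neighbors of $z$ inside $S_\ell\}|-1$. That is not immediately good enough, so instead I would argue about the moment inside the inner loop when we have tossed the coin for every vertex of $S_\ell\setminus\{y_{\ell-1}\}$: at that point the mechanism computes $y_\ell'$ as the argmax of $d^-_{\mathcal R}(\cdot)$ over unrevealed vertices. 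The key point is that each $z\in Z$ has been independently revealed with probability $1-\epsilon$, so with probability $\epsilon$ it is still available to be chosen as $y_\ell'$; since $|Z|\ge \epsilon\Delta$ and $\Delta$ is large, with probability at least $1-(1-\epsilon)^{\epsilon\Delta}\ge 1-\epsilon$ (using $\Delta\ge\Delta_1$) at least one vertex $z^\star\in Z$ remains unrevealed when $y_\ell'$ is selected.

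Next I would lower-bound $d^-_{\mathcal R}(z^\star)$ at that instant. The revealed set $\mathcal R$ at that point contains all of $S_{\le\ell-1}$ minus at most one vertex ($y_{\ell-1}$), plus the coin-revealed subset of $S_\ell\setminus\{y_{\ell-1}\}$. Every in-neighbor of $z^\star$ lying in $S_{\le\ell-1}$ contributes, except possibly $y_{\ell-1}$ itself, giving at least $d_{\ell-1}(z^\star)-1$. But we are told $d_\ell(z^\star)\ge(1-3\epsilon)\Delta+1$, and I need to relate $d_{\ell-1}$ to $d_\ell$: the in-neighbors of $z^\star$ that lie in $S_\ell$ itself are the ones not yet guaranteed revealed. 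Here is where $z^\star$ being \emph{regionally} well-estimated enters — all of its in-neighbors of in-degree at least $\epsilon^2\Delta$ are $\hat\epsilon$-well-estimated, hence lie in a slice with index close to their own estimated degree. Since $z^\star\in S_\ell$, an in-neighbor $u$ of $z^\star$ with large degree has $d_e(u)\le d_e(z^\star)+O(\text{width})$; more usefully, the in-neighbors of $z^\star$ that sit in $S_\ell$ have estimated degree within one slice-width $\le\epsilon\Delta$ of $z^\star$'s, and there are at most $\Delta$ of them but we can bound how many fall in $S_\ell$ versus earlier slices. Combining the width bounds of Lemma~\ref{lem:width} with well-estimation, the number of in-neighbors of $z^\star$ inside $S_\ell$ is at most $O(\epsilon)\Delta$ (those of small degree number at most $\epsilon^2\Delta\cdot\tau = O(1)\cdot\epsilon^2\Delta$ overall... this needs care), and after also subtracting the at-most-$\epsilon|S_\ell|$-in-expectation coin losses — which by a Chernoff/Markov bound is $\le 2\epsilon\Delta$ with probability $1-\epsilon$ — we get $d^-_{\mathcal R}(z^\star)\ge (1-3\epsilon)\Delta - O(\epsilon)\Delta \ge (1-5\epsilon)\Delta$ after absorbing constants into the $\epsilon$'s (rescaling as the paper does elsewhere).

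Finally, since $y_\ell'$ is chosen to maximize $d^-_{\mathcal R}(\cdot)$ over unrevealed vertices and $z^\star$ is one such vertex, $d^-_{\mathcal R}(y_\ell')\ge d^-_{\mathcal R}(z^\star)\ge(1-5\epsilon)\Delta$, and the actual in-degree satisfies $d(y_\ell')\ge d^-_{\mathcal R}(y_\ell')\ge(1-5\epsilon)\Delta$, which is the claim. Taking a union bound over the two bad events (no surviving $z^\star$; too many coin losses), each of probability at most $\epsilon$ — or rather each of probability at most $\epsilon/2$ after adjusting constants — gives the stated probability $1-\epsilon$.

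The main obstacle I anticipate is the bookkeeping in the second paragraph: precisely bounding how many in-neighbors of a well-estimated vertex $z\in S_\ell$ can themselves lie in $S_\ell$ (as opposed to earlier slices), since this is what governs the gap between $d_{\ell-1}$ and $d_\ell$ and hence the final constant in front of $\epsilon$. This is where regional well-estimation, the slice-width bounds of Lemma~\ref{lem:width}, and the fact that low-degree ($<\epsilon^2\Delta$) in-neighbors are few in aggregate must all be combined carefully; everything else is a routine coin-flip concentration argument plus the argmax observation.
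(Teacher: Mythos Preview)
Your outline has the right skeleton (the event that some $z\in Z$ survives the coin toss, a concentration argument for its revealed in-degree, and the argmax observation), but the middle step is built on a false claim. You try to show that only $O(\epsilon)\Delta$ in-neighbours of $z^\star$ lie in $S_\ell$, invoking regional well-estimation and the slice width. This is not true in general: nothing prevents essentially \emph{all} of $z^\star$'s in-neighbours from having estimated degree in the same slice as $z^\star$. The hypothesis $d_\ell(z)\ge(1-3\epsilon)\Delta+1$ only says the in-neighbours lie in $S_{\le\ell}$; it says nothing about how they split between $S_{<\ell}$ and $S_\ell$. So the sentence ``the number of in-neighbours of $z^\star$ inside $S_\ell$ is at most $O(\epsilon)\Delta$'' cannot be salvaged, and the detour through regional well-estimation is a red herring --- the paper's proof never uses that hypothesis inside this lemma.

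The fix is to stop trying to separate $S_\ell$ from earlier slices. For each fixed $z\in Z$, set $U_z$ to be the in-neighbours of $z$ in $S_{\le\ell}\setminus\{y_{\ell-1}\}$, so $|U_z|\ge(1-3\epsilon)\Delta$. Each $u\in U_z$ is revealed with probability $1$ if $u\in S_{<\ell}$ and with probability $1-\epsilon$ if $u\in S_\ell$; in either case the probability is at least $1-\epsilon$, independently. A Chernoff bound then gives that fewer than $(1-5\epsilon)\Delta$ of $U_z$ are revealed with probability at most $1/(3\Delta)$. Now union-bound over \emph{all} $z\in Z$ together with your event $\mathbf{B}_0$ (no survivor), which has probability at most $\epsilon/2$; since $|Z|\le\lceil\epsilon\Delta\rceil$ the total failure probability is at most $\epsilon$. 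Note that you really do need the union bound over all of $Z$: arguing concentration only for ``the'' survivor $z^\star$ runs into the problem that $z^\star$ is itself determined by the coin tosses, so you cannot simply apply Chernoff once and then pick $z^\star$ afterwards.
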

\begin{proof}
First, by selecting a subset of $Z$ if necessary, we may assume that $Z = \lceil \eps \Delta \rceil$.
Now we define a collection of bad events and show that $d(y_\ell) \geq (1-5\epsilon)\Delta$ if none
of these events arise. We then show the probability that any of these bad events occurs is small. 

Let ${\bf B}_0$ be the event that every vertex in $Z$ is placed in $\mathcal{R}$ when we
sample the vertices of slice $\ell$. 
We may assume the provisional leader $y_{\ell-1}$ is not in $Z$. Thus, since $|Z| \geq \epsilon \Delta$, 
the probability of event ${\bf B}_0$ is at most
   \begin{equation*} 
 (1-\epsilon)^{\epsilon \Delta} \ \ =\ \  e^{\epsilon\Delta\cdot \log (1-\epsilon)}  
\  \ \leq\  \  e^{-\eps^2 \Delta} 
 \ \ \leq\ \   \frac{\eps}{2}
 \end{equation*} 
 Here the first inequality holds since $\log(1-\eps) \leq -\eps$.
The second inequality holds as $\Delta \geq \frac{2}{\epsilon^4} \ge \frac{12^2}{\hat{\eps}^4}$.
 
 Now take any $z \in Z$ and let $U_z$ be the set of in-neighbors of $z$ in $S_{\leq \ell}\setminus \{y_{\ell-1}\}$. 
 We have $|U_z| \geq (1-3\epsilon)\cdot \Delta$.
 Let ${\bf B}_z$ be the event that less than $(1-5\epsilon)\cdot\Delta$ vertices of $|U_z|$
 are in $\mathcal{R}$ at the time we sample the vertices of slice $\ell$. 
 
 To analyze the probability of this event consider any $u_i \in U_z$.
 Now, if $u_i \in S_j$ for $j< \ell$ then $u _i$ is already in $\mathcal{R}$.
 Otherwise, if $u_i \in S_{\ell}$ then it is now added to $\mathcal{R}$
 with probability $(1-\epsilon)$.
 So consider a random variable $Y_i$ which has value $1$ if $u_i\in \mathcal{R}$ after this 
 sampling and $0$ otherwise. Note the $Y_i$ are not identically distributed.
So let $X_i$ be variables which are independent and identically 
 distributed and such that $X_i=Y_i$ if $u_i \in S_{\ell}$ 
 and $X_i$ has value $1$ with probability $(1-\epsilon)$ otherwise. We have
 \begin{eqnarray*} 
\brm{Pr} [\sum_i Y_i \leq (1-\epsilon-\hat{\eps})\cdot |U_z|] &\leq&
\brm{Pr}[ \sum_i X_i \leq (1-\epsilon-\hat{\eps})\cdot |U_z|] \\ 
&\leq& e^{-\hat{\epsilon}^2 (1-\epsilon)\cdot |U_z|/3} \\
&\leq& e^{-\frac13 \hat{\eps}^2 (1-\epsilon)(1-3\epsilon)\Delta} \\
&\leq& e^{-2 \sqrt{\Delta}} \\
&\leq& \frac{1}{3\Delta}
\end{eqnarray*}
Here the first inequality follows from Theorem~\ref{thm:chernoff}.
The second inequality holds as $|U_z| \geq (1-3\epsilon) \Delta$.
The third inequality holds by the choice $\sqrt{\Delta} \ge \frac{12}{\hat{\eps}^2} \ge \frac{6}{\hat{\eps}^2 (1-\epsilon)(1-3\epsilon)}$.
The fourth one is satisfied since $\Delta \geq 100$.

We now apply the union bound to the events ${\bf B}_0 \cup \bigcup_{z\in Z} {\bf B}_z$.
Since $Z = \lceil \eps \Delta \rceil$, none of these events occur with probability at least 
$1-\frac{\epsilon \Delta+1}{3\Delta}-\frac{\eps}{2}\ge 1-\eps$.    
Thus, with probability at least $1-\eps$, after the sampling of the slice $S_\ell$, there is a vertex $z\in Z$ that is not 
in $\mathcal{R}$ but that has at least $(1-5\epsilon)\cdot \Delta$ in-neighbors in $\mathcal{R}$.
 The new provisional leader $y_{\ell}'$ must then
satisfy $d_{\mathcal{R}}^-(y_{\ell}') \geq d_{\mathcal{R}}^-(z) \geq (1-5\epsilon)\Delta$, as required, since
 $(1-5\epsilon)\cdot\Delta \le  (1-\epsilon-\hat{\eps})\cdot |U_z|$. \qed
 \end{proof}

\vskip 5pt \noindent \emph{Proof of Theorem~\ref{thm:slice}.}
We may now prove that the slicing mechanism is nearly optimal.
We assume that $\Delta \geq M_\epsilon=\max(\Delta_1,\Delta_2)$ 
and that $\epsilon\leq \frac{1}{8}$.
Let $x$ be a vertex of in-degree $\Delta$. We assume that $x$ is not selected in $\mathcal{X}$ during the sampling phase 
and that all the vertices of $N^{-2}(x)$ with in-degree at least $\epsilon^2 \Delta$ are regionally well-evaluated.
We need the following claim, where $k$ denotes the integer such that 
the vertex $x$ of maximum in-degree is in $S_k$. 
\begin{claim}\label{cl:kplusone}
Let $u$ be a vertex in $N^{-3}(x)$ that is not in $S_{\le k}$. Then $u\in S_{k+1}$.
\end{claim}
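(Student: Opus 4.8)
The plan is to control the estimated degree of a vertex $u \in N^{-3}(x)$ and compare it to that of $x$, using the regional well-estimation assumption.

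First I would observe that since $u \in N^{-3}(x)$ and, by Claim hypothesis, $u$ is not already placed in a slice $S_j$ with $j \le k$, the only way $u$ could fail to be in $S_{k+1}$ is if its estimated degree $d_e(u)$ were large enough to land it in some slice $S_j$ with $j \ge k+2$, i.e. $d_e(u) > (k+1)\epsilon^2 \Delta_e$. On the other hand, $x \in S_k$ forces $d_e(x) \le k \epsilon^2 \Delta_e$ (and in fact $d_e(x)$ is close to $\Delta = \Delta_e/(1 \pm \hat\eps)$). So the task reduces to showing $d_e(u)$ cannot exceed $d_e(x)$ by more than roughly one slice width, i.e. $d_e(u) \le d_e(x) + \epsilon^2 \Delta_e$.

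The key step is to bound $d_e(u)$ from above by something essentially no larger than $\Delta$. Here I would split into two cases according to whether $u$ has small or large true in-degree. If $d^-(u) < \epsilon^2 \Delta$, then even the worst-case estimate satisfies $d_e(u) \le d^-(u)/\epsilon \cdot \epsilon = $ — more carefully, $d_e(u) = \frac{1}{\eps} d^-_{\mathcal X}(u) \le \frac{1}{\eps} d^-(u) < \frac{\epsilon^2 \Delta}{\epsilon} = \epsilon \Delta$, which by Lemma~\ref{lem:width} is at most one slice width, and since $u \notin S_{\le k}$ this already pins $u$ into $S_{k+1}$ (its estimated degree is below the top of $S_{k+1}$, which has width at least $(1-\hat\eps)\epsilon^2\Delta$ starting from $k\epsilon^2\Delta_e$ — one needs $k \ge 1$ or to treat $k=0$ separately, which is easy). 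If instead $d^-(u) \ge \epsilon^2 \Delta$, then $u$ lies in $N^{-3}(x)$ and has in-degree at least $\epsilon^2\Delta$, so by Corollary~\ref{coro:inneighbors} (which is exactly the event we assumed holds) $u$ is $\hat\eps$-well-estimated: $d_e(u) \le (1+\hat\eps) d^-(u) \le (1+\hat\eps)\Delta$. Meanwhile $x$ is $\hat\eps$-well-estimated too, so $d_e(x) \ge (1-\hat\eps)\Delta$ and $\Delta_e \le (1+\hat\eps)\Delta$ since $\Delta_e$ is achieved by some well-estimated vertex of in-degree at most $\Delta$ (or, more simply, $\Delta_e \le \Delta/\eps$; but we want the sharper bound). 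Then $d_e(u) - d_e(x) \le (1+\hat\eps)\Delta - (1-\hat\eps)\Delta = 2\hat\eps \Delta = \frac{\epsilon^2}{2}\Delta \le \epsilon^2 \Delta_e$, which is less than one slice width. Hence $u$'s slice index exceeds $x$'s by at most one, so $u \in S_{k}$ or $S_{k+1}$; since $u \notin S_{\le k}$ we conclude $u \in S_{k+1}$.

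The main obstacle I anticipate is the careful bookkeeping at the boundaries of the slicing: the slices are defined by the half-open-type conditions $(i-1)\epsilon^2\Delta_e \le d_e(v) \le i\epsilon^2\Delta_e$, so a vertex near a slice boundary could in principle be assigned to either of two adjacent slices, and one must make sure the argument "$d_e(u)$ is at most $d_e(x)$ plus a slice width, hence $u$ is in $S_k$ or $S_{k+1}$" is watertight given whatever tie-breaking rule the mechanism uses, and that the edge case $k = 0$ (where $S_0 = \mathcal X$ is special) is handled — but since $u \in V \setminus \mathcal X$ is genuinely sliced, $k \ge 1$ here, and $x \in S_k$ with $d_e(x) \ge (1-\hat\eps)\Delta > \epsilon^2 \Delta_e$ forces $k \ge 1$ anyway. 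Everything else is elementary arithmetic with $\hat\eps = \epsilon^2/4$ and the width bounds from Lemma~\ref{lem:width}.
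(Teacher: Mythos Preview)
Your approach matches the paper's: split on whether $d^-(u) \ge \epsilon^2\Delta$, and in the large-degree case use that both $u$ and $x$ are $\hat\eps$-well-estimated to bound $d_e(u)-d_e(x)$ by $2\hat\eps\Delta$, which is less than one slice width. That part is correct and essentially identical to the paper's argument.

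Your treatment of the small-degree case, however, is confused. You correctly derive $d_e(u) < \epsilon\Delta$, but then claim that since this is ``at most one slice width'' and $u \notin S_{\le k}$, it ``pins $u$ into $S_{k+1}$''. This is a non-sequitur: $d_e(u) < \epsilon\Delta$ is an \emph{absolute} bound on the estimated degree, not a bound on $|d_e(u)-d_e(x)|$. For $u$ to lie in any slice with index exceeding $k$ one needs $d_e(u) \ge k\epsilon^2\Delta_e \ge d_e(x) \ge (1-\hat\eps)\Delta$, which is far larger than $\epsilon\Delta$. So the correct conclusion in this case is that $d_e(u) \le \epsilon\Delta \le d_e(x)$ forces $u$ into a slice of index at most $k$, contradicting the hypothesis; the case is simply vacuous. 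The paper handles it exactly this way: it observes $d_e(u) \le \epsilon\Delta \le d_e(x)$ and immediately passes to the assumption $d^-(u) \ge \eps^2\Delta$. Once you fix this, your proof is complete and coincides with the paper's.
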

\begin{proof}
Take a vertex $u\in N^{-3}(x)$. If $d^-(u) \leq \eps^2 \Delta$, then its estimated degree is at most $\epsilon \Delta \leq d_e(x)$. 
Thus now we can assume that $d^-(u) \geq \eps^2 \Delta$ and then $u$ is $\hat{\eps}$-well-estimated by assumption on $x$.
First observe that the set of possible estimated degrees of $u$ intersects at most two slices. 
To see this note that the range of $d_e(u)$ is less than $2\hat{\eps}\cdot \Delta$ as 
$u$ is $\hat{\eps}$-well-estimated. On the other-hand, by Lemma \ref{lem:width},
the width of a slice is at least $(1-\hat{\eps}) \epsilon^2 \cdot \Delta$. 
Since $\hat{\eps} = \frac{\epsilon^2}{4}$ we have 
\begin{equation*}
2\hat{\eps} \Delta \ \ \leq \ \ \frac{1}{2} \epsilon^2 \Delta \ \ <\ \  (1-\hat{\eps}) \epsilon^2 \Delta
\end{equation*}
Since the range is less than the width, the observation follows.
 
The vertex $x$ of maximum degree is $\hat{\eps}$-well-estimated and is in the slice $k$. 
Therefore, because $u$ is $\hat{\eps}$-well-estimated
(and necessarily $d(u)\le d(x)$), there must be a slice smaller than or equal to $k$ in its range of $u$.
Thus $u$ cannot be in a slice with index exceeding $k+1$. \qed
\end{proof}

Assume first that there exists a set $Z_1$ of at least $\epsilon \Delta$ vertices of $N^{-2}(x)$ such that  
$Z_1 \cap S_i = \emptyset$ for $i \leq k$. 
The claim ensures that $Z_1 \subseteq S_{k+1}$. By considering a subset of $Z_1$, we may assume 
that $|Z_1|=\epsilon \Delta$. (We assume that $\eps \Delta$ is an integer, for simplicity.)
Then, for any $z \in Z_1$, we have $d_e(z) \geq d_e(x)$ since $z$ is in the slice after $x$. Moreover, as  
$x$ and $z$ are $\hat{\eps}$-well-estimated, the facts that $|d_e(z) -d(z)| \leq \hat{\eps} \Delta$ 
and $|d_e(x) -d(x)| \leq \hat{\eps} \Delta$ imply that
\begin{equation*}
d(z) \ \ \geq\ \ (1-2\hat{\eps})\Delta \ \ \geq \ \ (1-\epsilon)\Delta
\end{equation*}
Furthermore, by the above claim, we must have that $d_{k+1}(z) = d(z)$. 
Consequently, we may apply Lemma~\ref{lemma:bigdegrees} 
to the set $Z_1$. This ensures that the in-degree 
of the elected vertex is at least $(1-5\epsilon)\Delta$ with probability at least $(1-\frac{1}{\eps})$.

On the other hand, assume that now less than $\epsilon \Delta$ vertices of $N^{-2}(x)$ are in $S_{k+1}$. In particular,
we have $d_k(x) \geq (1-\epsilon)\Delta$. If $d_{k-1}(x) \geq (1-4\epsilon)\cdot \Delta+1$ then 
the elected vertex has degree at least $(1-4\epsilon) \Delta$ by Lemma~\ref{lemma:smalldegrees}. 

So, assume that $d_{k-1}(x) \leq (1-4\epsilon)\Delta$.
Then at least $3\epsilon \Delta$ in-neighbors of $x$ are in $S_k$. 
Denote by $Z_2$ a set of  $\epsilon \Delta$ in-neighbors 
of $x$ in $S_k$. Every vertex $z\in Z_2$  
has in-degree at least $(1-\epsilon-2 \hat{\eps})\Delta \geq (1-2\epsilon)\Delta+1$; this follows because both $x$ and $z$ 
are $\hat{\eps}$-well-estimated and because the width
of a slice is at most $\epsilon \Delta$ (Lemma~\ref{lem:width}). 
For any $z \in Z_2$, since at most $\epsilon \Delta$ of the in-neighbors of $z$ are in $S_{k+1}$,
we have $d_k(z) \geq (1-3\epsilon) \Delta+1$.  Moreover, by assumption all the vertices of $Z_2$ are regionally well-estimated. 
Hence, by Lemma~\ref{lemma:bigdegrees}, with probability at least $(1-\epsilon)$, the degree of the 
elected vertex is at least $(1-5\epsilon)\Delta$, as desired. \vspace{8pt}

Consequently, the slicing mechanism typically outputs a near-optimal vertex. So what is the probability
that assumptions made during the proof fail to hold? Recall that Assumptions (A1) and (A2) fail to hold with probability
at most $2\eps$. Given these two assumptions,  Lemma~\ref{lemma:bigdegrees} fails to 
output a provisional leader with in-degree at least $(1-5\epsilon)\cdot \Delta$ 
with probability at most $\eps$. Thus the total failure probability is at most $3\eps$.
Consequently, the expected in-degree of the elected vertex is at least $(1-3\epsilon)(1-5\epsilon)\cdot \Delta \geq (1-8\epsilon)\Delta$, 
which concludes the proof of Theorem~\ref{thm:slice}. 
 \qed
\vskip 5pt

We conclude with some remarks. Here $M_\epsilon = \mathcal{O}(\frac{1}{\epsilon^8})$; the degree of this polynomial 
can certainly be improved as we did not attempt to optimize it. 

The slicing mechanism can be adapted to select a fixed number $c$ of winners rather than one. 
Let us briefly explain how.
Instead of selecting only one provisional winner $y$ during each iteration of the election phase, we can select a set of size $c$
containing unrevealed vertices maximizing $d^-_ \mathcal{R}$. \\
Let $x_1,\ldots,x_c$ be the $c$ vertices of highest in-degree. With high probability all the vertices of $N^{-2}(x_i)$ are regionally well-estimated
for every $i \leq c$ and with high probability none of them are selected during the sampling phase. 
Now consider two cases: either $N^{-2}(x_1)$ contains many vertices of degree almost $\Delta$, and then an adaptation of
Lemma~\ref{lemma:bigdegrees} ensures that with high probability  $c$ vertices  are not sampled during the sampling of the election phase and
the $c$ elected vertices have large degree. Or $N^{-2}(x_1)$ has few vertices of degree almost $\Delta$
and then when the slice of $x_1$ is considered, if $x_1$ is not selected, then all the selected vertices have degree almost $\Delta$ by Lemma~\ref{lemma:smalldegrees}. 
A similar argument can be repeated for every vertex $x_i$.


 \ \\
 \noindent
{\bf Acknowledgements.} The authors are extremely grateful to Felix Fischer
for introducing us to the impartial selection problem and for discussions.

\section*{Appendix}

In this appendix, we include the proof omitted from the main text due to space constraints.

\restatelem{\ref{lem:prob}}{For every  $\eps_1>0$ there exists $N_1$ such that for all positive
 integers $N_1<\Delta \leq n$ and $k \leq n$ the following holds. If $X \subseteq [n]$ are chosen uniformly 
 at random subject to $|X|=\Delta$,  then 
$\brm{Pr}\left[\left||X \cap [k]| - \frac{k\Delta}{n}\right| \geq \eps_1 \Delta \right] <\eps_1$.}

\begin{proof}  Let $k \leq n$. Clearly, the lemma holds for $\Delta=n$, and so we assume $\Delta < n $. Let $p=\Delta/n$, and 
let $Y$ be a random subset of $[n]$ obtained by choosing every element of $[n]$ independently with 
probability $p$. Our first goal is to lower bound $\brm{Pr}[|Y|=\Delta]$. We start by deriving the following estimate:
\begin{equation}\label{eq:probInduction}
\frac{n!}{(n-\Delta)!}(n-\Delta)^{n-\Delta} \ \  \geq \ \  n^ne^{-\Delta}
\end{equation}
which holds for all non-negative integers $\Delta$. 
We prove (\ref{eq:probInduction}) by induction on $\Delta$. The base case $\Delta=0$ is trivial for any non-negative $n$. 
For the induction step, we have
\begin{eqnarray*}
\frac{n!}{(n-\Delta)!}(n-\Delta)^{n-\Delta} 
&=& n \frac{(n-1)!}{(n-\Delta)!}(n-\Delta)^{n-\Delta} \\
&\geq& n(n-1)^{n-1}e^{-(\Delta-1)} \\
&\geq&  n(n-1)^{n-1}\left(1+\frac{1}{n-1} \right)^{n-1}e^{-\Delta}\\
&=& n^ne^{-\Delta}
\end{eqnarray*} 
Here the first inequality follows by induction, the second inequality
follows from the fact that $(1+\frac{1}{n})^n \leq e$ for every $n$.
Thus (\ref{eq:probInduction}) holds for all $\Delta$.
We then have
\begin{eqnarray*}
\brm{Pr}[|Y|=\Delta] &=& \binom{n}{\Delta}p^{\Delta}(1-p)^{n-\Delta}\\
& =& \frac{n!}{(n-\Delta)!\Delta!}\left(\frac{\Delta}{n}\right)^\Delta \left(1-\frac{\Delta}{n}\right)^{n-\Delta}\\
&=&\frac{\Delta^{\Delta}}{\Delta!} \left(\frac{n!}{(n-\Delta)!}\frac{(n-\Delta)^{n-\Delta}}{n^n}\right) \\
&\geq& \frac{\Delta^{\Delta}}{\Delta!} e^{-\Delta}\\
&\geq& e^{-\Delta}\frac{\Delta^{\Delta}}{3\sqrt{\Delta}(\Delta/e)^\Delta}\\
&=& \frac{1}{3\sqrt{\Delta}}
\end{eqnarray*} 
Here the first inequality follows by (\ref{eq:probInduction}). The second inequality applies
when $\Delta$ is sufficiently large. Indeed Stirling's formula ensures that $n! \sim \frac{1}{\sqrt{2\pi n}}(\frac{n}{e})^n$;
thus, since $\sqrt{2\pi} \leq 3$, we have $\Delta! \leq 3\sqrt{\Delta}(\frac{\Delta}{e})^\Delta$.
Therefore
\begin{equation}\label{eq:prob2} \brm{Pr}\left[\left||X \cap [k]| - \frac{k\Delta}{n}\right| \geq \eps_1 \Delta \right] 
\ \ \leq\ \  3\sqrt{\Delta}\cdot \brm{Pr}\left[\left||Y \cap [k]| - \frac{k\Delta}{n}\right| \geq \eps_1 \Delta \right]
\end{equation}
By Theorem~\ref{thm:chernoff}, applied with $\delta=\eps_1$ and $\Delta=pn$,  we have
\begin{equation}
\label{eq:prob3}
\brm{Pr}\left[\left||Y \cap [k]| - \frac{k\Delta}{n}\right| \geq \eps_1 \Delta \right] \ \ \leq \ \  e^{-\frac{\eps_1^2\Delta}{3}}
\end{equation} 
Combining (\ref{eq:prob2}) and (\ref{eq:prob3}), we deduce that
\begin{equation}
\label{eq:prob4}\brm{Pr}\left[\left||X \cap [k]| - \frac{k\Delta}{n}\right| \geq \eps_1 \Delta \right] \ \  \leq \ \ 3\sqrt{\Delta}e^{-\frac{\eps_1^2\Delta}{3}}
\end{equation}
Clearly, for $\Delta$ sufficiently large with respect to $\eps_1$ we have  $3\sqrt{\Delta}e^{-\frac{\eps_1^2\Delta}{3}} \leq \eps_1$. 
It follows that (\ref{eq:prob4}) gives the lemma for such $\Delta$, as desired.~\qed
\end{proof}


\begin{thebibliography}{99}

\bibitem{AFP11} N. Alon, F. Fischer, A. Procaccia and M. Tennenholtz, ``Sum of us: strategyproof selection from the selectors", 
{\em Proceedings of 13th Conference on Theoretical Aspects of Rationality and Knowledge (TARK)}, pp101-110, 2011.

\bibitem{CMT08} G. de Clippel, H. Moulin and N. Tideman, ``Impartial division of a dollar", 
{\em Journal of Economic Theory}, {\bf 139}, pp176-191, 2008.

\bibitem{DFP13} O. Dekel, F. Fischer and A. Procaccia, ``Incentive compatible regression learning", 
{\em Journal of Computer and System Sciences}, {\bf 76(8)}, pp759-777, 2013.

\bibitem{FK14} F. Fischer and M. Klimm, ``Optimal impartial selection", 
to appear {\em Proceedings of the 15th Conference on Economics and Computation (EC)}, 2014.

\bibitem{HM13} R. Holzman and H. Moulin, ``Impartial nominations for a prize", 
{\em Ecomometrica}, {\bf 81(1)}, pp173-196, 2013.

\bibitem{PT14} A. Procaccia and M. Tennenholtz, ``Approximate mechanism design without money", to appear
{\em ACM Transcactions on Economics and Computing}, 2014.

\end{thebibliography}
\end{document}